\newcommand{\starlanguage}{Significance indicators: $p \le 0.1:\ddagger$, $p \le 0.05:*$, $p \le 0.01:**$, and $p \le .001:***$.}
\DeclareMathOperator*{\argmax}{\arg\!\max}
\newtheorem{proposition}{Proposition}
\newcommand{\ET}{\text{\#EconTwitter}}
\newcommand{\RP}{\text{RePEc}}
\newif\ifcomments
\newcommand{\OCD}{\clearpage \newpage}
\newcommand{\COMMENT}[1]{\textbf{\textcolor{red}{#1}}}
\newcommand{\OCD}{}
\newcommand{\COMMENT}[1]{}
\newcommand{\numUsersNotclean}{68,592}
\newcommand{\numUsers}{55,966}
\newcommand{\medianFollowers}{2,004}
\newcommand{\medianFollowing}{913}
\newcommand{\medianRatio}{2.05}
\newcommand{\meanFollowers}{10,304}
\newcommand{\meanFollowing}{1,273}
\newcommand{\meanRatio}{13.98}
\newcommand{\ALLnumUsers}{26,284,663}
\newcommand{\ALLmedianFollowers}{266}
\newcommand{\ALLmedianFollowing}{524}
\newcommand{\ALLmedianRatio}{0.52}
\newcommand{\ALLmorethanoneRatio}{25.1}
\newcommand{\followingLower}{10}
\newcommand{\followerLower }{25}
\newcommand{\followerUpper }{200,000}
\newcommand{\followingUpper}{5,000}
\newcommand{\tweetLower    }{50}
\begin{document}  

\date{\today}

\newcommand{\WorkingTitle}{\textbf{The Production and Consumption of Social Media}}

\newif\ifdraft
\ifdraft
\title{\WorkingTitle}
\author{Anonymized for review}
\else 
\title{\WorkingTitle{}\thanks{
We thank
Hunt Allcott,
Erik Brynjolfsson,
Paul Oyer,
Evan Sadler,
Yo Shavit, 
Alex Tabarrok, and
Johan Ugander
for their very helpful comments.
We are grateful to Jie Lu for his excellent research assistance. 
Author contact information, code, and data are currently or will be available at \url{http://www.john-joseph-horton.com/}.
}} 
\author{Apostolos Filippas \\ Fordham 
\and 
		John J. Horton     \\ MIT \& NBER}
\fi

\newcommand{\AERAbstract}{
\noindent 
We model social media as collections of users producing and consuming content.
Users value consuming content, but doing so uses up their scarce attention, and hence they prefer content produced by more able users.
Users also value receiving attention, creating the incentive to attract an audience by producing valuable content, but also through \emph{attention bartering}---users agree to become each others' audience.
Attention bartering can profoundly affect the patterns of production and consumption on social media,
explains key features of social media behavior and platform decision-making, 
and yields sharp predictions that are consistent with data we collect from \ET{}.
}

\maketitle

\begin{abstract}
  \AERAbstract
  \newline \newline
  JEL Codes: B14, L14, L82
\end{abstract} 

\newpage \clearpage
\onehalfspacing

\OCD
\section{Introduction}
Traditional media, such as newspapers, TV, and radio, are characterized by a few ``stars'' who produce content, and a much larger collection of people who purely consume that content~\citep{rosen1981economics, krueger2005economics}.
On social media, by contrast,  a large number of users both produce and consume content:
the ``social'' of ``social media'' is that millions of people
tweet on Twitter, 
dance on TikTok, 
rant on Facebook, 
open boxes on YouTube,
share sunsets on Instagram, 
and announce exciting new professional chapters on LinkedIn.
However, all of this new content has not been matched by an equal increase in the total supply of human attention---the scarce factor consumed during content consumption~\citep{simon1971designing}.
This paper examines how human attention is allocated in equilibrium over social media content---who consumes what---and how this allocation in turn affects the incentives for content production.

A necessary starting point is to observe that the costs of producing, distributing, and accessing content have fallen dramatically with  digitization, and with the proliferation of modern computing technology.
Today, marginal distribution costs are \emph{de minimis}, and technical production costs have diminished radically, even for audio and video content.
Falling costs can explain the larger number of professional producers, the greater variety of consumption, and why users spend more time on social media.
Nevertheless, producing content still has an opportunity cost.
Why then do so many amateur users produce content on social media, when almost no one is paid to do it?

Part of the explanation for amateur social media production is the desire for an audience and positive reaction from that audience.
Although people sometimes produce content without an external audience---diaries exist, and people enjoy playing music by themselves---many clearly value having an audience.
The desire for an audience can explain the effort put into social media---the production decision---but it can also affect the \emph{consumption} decision, in a way that is particular to social media.

We argue that the distinguishing feature of social media platforms is that they allow users to exchange some of their own attention for the attention of others, in order to obtain a larger audience.
Unlike most economic situations where individuals produce to consume, social media users can partially consume in order to produce. 
In particular, social media users can ``follow'' others not to consume their content, but to be ``followed back'' and consume their attention.
We call this exchange \emph{attention bartering}.

Attention bartering is an explicit part of the design of several social media platforms.
For example, Twitter allows unilateral following, and users can decide whether to engage in reciprocal following---``I'll follow you if you follow me.''
On other social media platforms where relationships are bilateral, such as Facebook and LinkedIn, attention bartering is informal and takes the form of users engaging with each others' content---``I like your posts, but only if you like mine.''   

Consider Alice, a user who decides where to allocate her attention.
On traditional media, Alice can only use her attention to consume content, and hence she allocates her attention towards consuming the best possible content---content produced by ``stars.''
Alice also likes receiving attention, but is not quite good enough to command an audience by producing content alone.\footnote{
We use ``ability'' here and throughout the paper as a measure of the consumption value other users place on the content without any implied judgement of the true value or worth of that content.
One of the most ``able'' twitter users by our definition was the $45$th President of the United States, who routinely shared to his more than 70 million Twitter followers ``content'' that would have otherwise been confined to poorly photocopied newsletters and email chains---before his suspension on January 8, 2021. {\color{white} :-)}
}
On social media, however, Alice can strike a ``deal'' with similarly good-but-not-great user Bob.
Alice agrees to give Bob some attention, in exchange for Bob giving some attention to her.
With this deal in place, they both get an audience member, and they now have the incentive to start producing content.
If Bob follows Alice then we say that Bob is Alice's follower, and that Alice is Bob's followee.

Both Alice and Bob would have not attracted any followers without the ability to attention barter, and hence they would have remained silent in a world without attention bartering.
The cost of their deal is the attention that they would have rather allocated elsewhere: if Alice was to renege from their deal and stopped following Bob, Bob would also renege because Alice's content is not good enough to attract his  attention ``organically.''
Their deal is sustained because social media platforms make it easy to verify that each holds their end of their bargain. 
Importantly, because attention bartering requires consuming what the partner produces, both parties have an incentive to seek out the best bargain, that is, the best producers who are willing to barter with them.
Together, these incentives create pressure for a kind of assortative matching.

We formalize the Alice and Bob situation in a model of attention bargaining, and solve for its equilibrium.
In our model, we assume that each user has limited attention and that there is substantial vertical differentiation among users.
We set aside horizontal differentiation to keep the analysis focused, but show later that it can be incorporated without qualitatively changing the results.
We also assume that attention from other users has increasing but diminishing returns.
As such, users are better off bartering with users who produce better content, all else equal, and two types of users do not attention barter in equilibrium:
(1) users with such low ability that others are not willing to barter with them, and 
(2) users with very high ability that attract followers organically, and  have no incentive to barter because attention has decreasing marginal returns. 
In-between these two extremes, attention bartering ``clubs'' emerge, whose ``members'' have similar  abilities, barter with all other members of the same club, and actively produce content.
Higher-ability barterers barter more, and select themselves into more populous clubs, comprising  members of higher average quality.
Note that clubs are not assumed explicitly, but emerge due to the production and consumption incentives of social media users.

Users in reciprocal following relationships---club members---would have not attracted any followers, and would not have produced any content without attention bartering.
With attention bartering, club members attract some followers and produce content actively, thereby experiencing lower consumption utilities, but higher attention utilities and higher total utilities from using the platform. 
As such, attention bartering results in more active users, and more content produced on the platform, albeit of lower average quality.

To assess some of the model assumptions and predictions empirically, we collect data from \ET{}, a Twitter community comprising professional and amateur users who tweet mostly about economics, and often follow each other.
An important feature of our data is that it allows us to observe the production and consumption decisions of \ET{} users, as well as to obtain measures of attention bartering and user ability. 

The distributions of \ET{} users' follower and followee statistics are consistent with our key assumptions of (a) vertical differentiation in ability, and (b) scarce attention. 
Follower counts are highly right-skewed, with some ``stars'' attracting enormous numbers of followers.
This is consistent with producers facing no distribution costs and being vertically differentiated, with the ``best'' users having vast audiences.
In contrast, the distribution of users' followees is substantially less right-skewed, and does not have a long tail.
This is consistent with consumers having limited attention---there are no ``super consumers'' that are able to consume 100x or 1000x the content of the typical user.
Despite this follower/followee distinction---and some star users having enormous ratios---follower-to-followee ratios are clustered around one.
These patterns are consistent with findings reported in previous work spanning several social media platforms, and over a long period of time~\citep{kwak2010twitter, myers2014information, sadri2018analysis}.
Many models of network formation can be made consistent with the patterns described above; what is distinguishing about our model are predictions about outcomes and behaviors conditional upon user ability.

We of course cannot obtain some objective measure of user ability, and trying to infer it from network statistics is backwards---we want to predict outcomes and behaviors from ability, not infer ability from outcomes.
As a proxy for ability, we use how many Twitter ``lists'' a user has been added to by other users, but adjusted for how active the account has been up until when we collected data.
Lists allow users to consume the content of listed members without affecting their follower and followee count.
Importantly, because lists are not salient within users' profiles, prospective followers cannot use list memberships to infer ability.
Because of these features of lists, our view is that list memberships are more likely to capture the actual consumption value users have for a given user's tweets, untainted by attention bartering considerations.

Consistent with our ability measure being a good proxy for what users want to consume, the average number of followers is strongly increasing in their ability.
In contrast to followers, the average number of followees is first increasing in ability, attains its maximum at medium-to-high ability levels, and then sharply decreases at the highest ability levels.
This ``followee dip'' is a telltale for attention bartering in our model---the most able users who attract many organic followers have little need to reciprocate, and so they only follow accounts they enjoy following.
It is of course possible that the most able users also happen to have more limited amounts of attention, but there is no obvious reason this would be the case.
Actually looking at the reciprocation decisions directly is illuminating on this point. 

In our data, we observe who follows whom, and we use reciprocal following as a proxy for attention bartering.
We find that low-ability users form reciprocal following relationships with just about any willing reciprocator, i.e., any user who is willing to ``follow back.''
As users' abilities increase they become more selective, following reciprocally higher-ability users on average.
Among all users, medium-ability users engage in reciprocal following most intensely.
In stark contrast to reciprocal following, no such pattern exists in unreciprocated following:
the content that users are willing to consume organically is unrelated to their own abilities, and is of high average quality.
The patterns are what our model predicts.

Although we offer a novel perspective on social media, the economic phenomena we highlight have social and economic antecedents. 
For example, consider an ``open mic night'' at a cafe.
The participants are not paid for the content they produce, but rather barter for attention: would-be poets listen to the poetry of their fellow poets, in exchange for their own poetry to be listened to. 
The cafe owner---like the operators of social media platforms---provides the meeting place for this exchange, profiting off the sale of food and drinks.
They are happy to sit silently in the background---insofar as the poets do not get too rowdy. 
In the context of academia, paper-reading seminars among grad students, and all-day workshops exhibit similar economics: academics listen to others' work for their work to be listened to.
Presenting at a workshop but not sticking around to hear others' talks is a \emph{faux pas} that only the most famous and high-ability academics can pull off safely.
In both the open mic night and the seminar/workshop, individuals would like to be members of the best club that will have them, as they would prefer to hear better-written poems, and to attend presentations of higher-quality papers.
However, the higher the quality of the other members, the less likely club membership is to be obtained. 

The social media platform assumes a passive role in our analysis, but the importance of attention bartering can also explain several platform strategy decisions.
Platforms have the incentive encourage attention bartering, because it increases the utilities of users who barter, inducing them to be more active on the platform, and in turn creating more opportunities to present ads.
Platforms can encourage attention bartering in several ways, including: 
rendering follower and followee content highly salient and creating reminders of the presence of the audience (such as ``likes''); 
allowing users to verify easily whether they are being followed by others; 
instituting rules against aggressive follow-churn behavior (which is a kind of attention bargaining fraud/defection); 
and implementing features to be used in lieu of unfollowing, such as ``muting'' and algorithmic curation.
Algorithmic curation of the feed spares the user some of the disutility of following low ability users, but without them giving up the attention of having that low ability user as a follower.
However, this is a line the platform has to walk carefully because algorithmic curation weakens the promise that any follower is in fact seeing the followed user's content, weakening the attention utility that motivates production.\footnote{Algorithmic curation is on a continuum with ``shadow banning''---a forum moderation technique designed to drive away users through reducing  nullifying engagement with their content, by means of making their content invisible to non-shadow banned users.}

One advantage of our descriptive model perspective is that it generates non-obvious insights for how policy changes might affect the equilibrium. 
For example, rather than ``banning'' users, a platform could simply prevent their following choices from being visible to others, thus prohibiting them from credibly bartering without silencing them.
The patterns of reciprocal following also makes clear the contours of ``clubs'' which in turn allows the platform to more easily remove troublesome users root and branch.
For example, the mass removal of ``Q anon'' conspiracy accounts on Twitter was surely facilitated by a high degree of reciprocal following, and was not purely determined by the content.\footnote{See \url{https://www.npr.org/2020/07/21/894014810/twitter-removes-thousands-of-qanon-accounts-promises-sweeping-ban-on-the-conspir}.}

This paper joins a burgeoning literature on social media~\citep{allcott2017social, alatas2019, allcott2020welfare,  levy2020social}, but is distinctive in taking a particularly economic lens.
Our modeling focus is on the importance of users' scarce attention in both their production and consumption decisions, complementing previous work examining the effects of algorithmic features~\citep{bakshy2015exposure, berman2020curation, levy2020social}.
Whereas we abstract away from the precise reasons why users want an audience, taking that desire as a given, several papers have explored the bases of this desire~\citep{toubia2013intrinsic, del2016echo, pennycook2021shifting}.

User connections on social media form graphs, and hence our model can be viewed as an economic network formation model, with rational economic agents severing and creating links to maximize their utilities~\citep{jackson2010social}, albeit adapted to our particular application.
One important assumption of our model is that cooperation between would-be barterers is sustainable, due to the repeated nature of the underlying game, and because forming and disengaging from relationships is costless on social media.
As attention bartering seemingly has substantial explanatory power, cooperation could be a fruitfully embedded in network formation models---see \cite{overgoor2019choosing} and \cite{chandrasekhar2021network} for excellent overviews of recent work.

The rest of the paper is organized as follows. 
In Section~\ref{sec:how_it_works}, we provide a brief introduction to Twitter as a paradigmatic social medium that allows for attention bartering.
In Section~\ref{sec:model}, we present a model of attention bartering, derive the main results, and examine extensions.
In Section~\ref{sec:empirics}, we assess empirically some of the predictions of the model using data from Twitter.
We discuss the implications of our results in Section~\ref{sec:discussion}, and we conclude in Section~\ref{sec:conclusion}. 
 
\OCD
\section{How Twitter works} \label{sec:how_it_works}
Twitter is a platform that allows its users to generate tweets---snippets of text, URLs, images, and videos, which do not exceed 280 characters---and share them with other users.
Twitter serves as the main motivating example throughout the rest of this paper.
For brevity, we henceforth use ``Twitter'' in lieu of ``social media'' whenever it is  possible, but our results should be understood to apply to any social medium that shares the features of Twitter that enable attention bartering.

A user's tweets are publicly available, unless she has elected to ``protect'' them by making her account private, or by ``blocking'' another user.
Users choose to see the tweets of others by ``following'' them.
Upon logging in Twitter, a user's ``timeline'' displays a stream of tweets from her followees---the users she follows.
Twitter allows unilateral following: any user is free to follow any
other user.\footnote{Social media platforms Instagram, Youtube, and TikTok also allow unilateral following.
However, on some social media, such as Facebook and LinkedIn, all relationships are bilateral.}
In addition to the tweets of followees, other content may be displayed on a user's timeline, including suggestions of users to follow, tweets from users not followed, and advertisements.
Users can choose between seeing tweets displayed in chronological order, or using Twitter's proprietary algorithmically generated ranking.\footnote{Twitter's ``top tweets'' feature produces  algorithmically curated tweet rankings, utilizing a variety of signals such as how popular a tweet is, 
and how the user's followees are interacting with it.
For more details, see~\url{https://help.twitter.com/en/using-twitter/twitter-timeline}.} 

The number of a users' followers and followees is prominently displayed.
Figure~\ref{fig:screenshots_profile} shows a Groucho Marx Twitter parody account, which has about 10,800 followers and 734 followees.
Additional content the user may opt to provide is also reported, such as the user's location.
\begin{figure}[h!]
\begin{center}
\caption{Elements of Twitter's information design.} 
\label{fig:screenshots}
\begin{subfigure}[t]{0.7 \textwidth}
\caption{The profile page of a twitter user.}
\vspace{-16pt}
\label{fig:screenshots_profile}
\center
\fbox{\includegraphics[width = \linewidth]{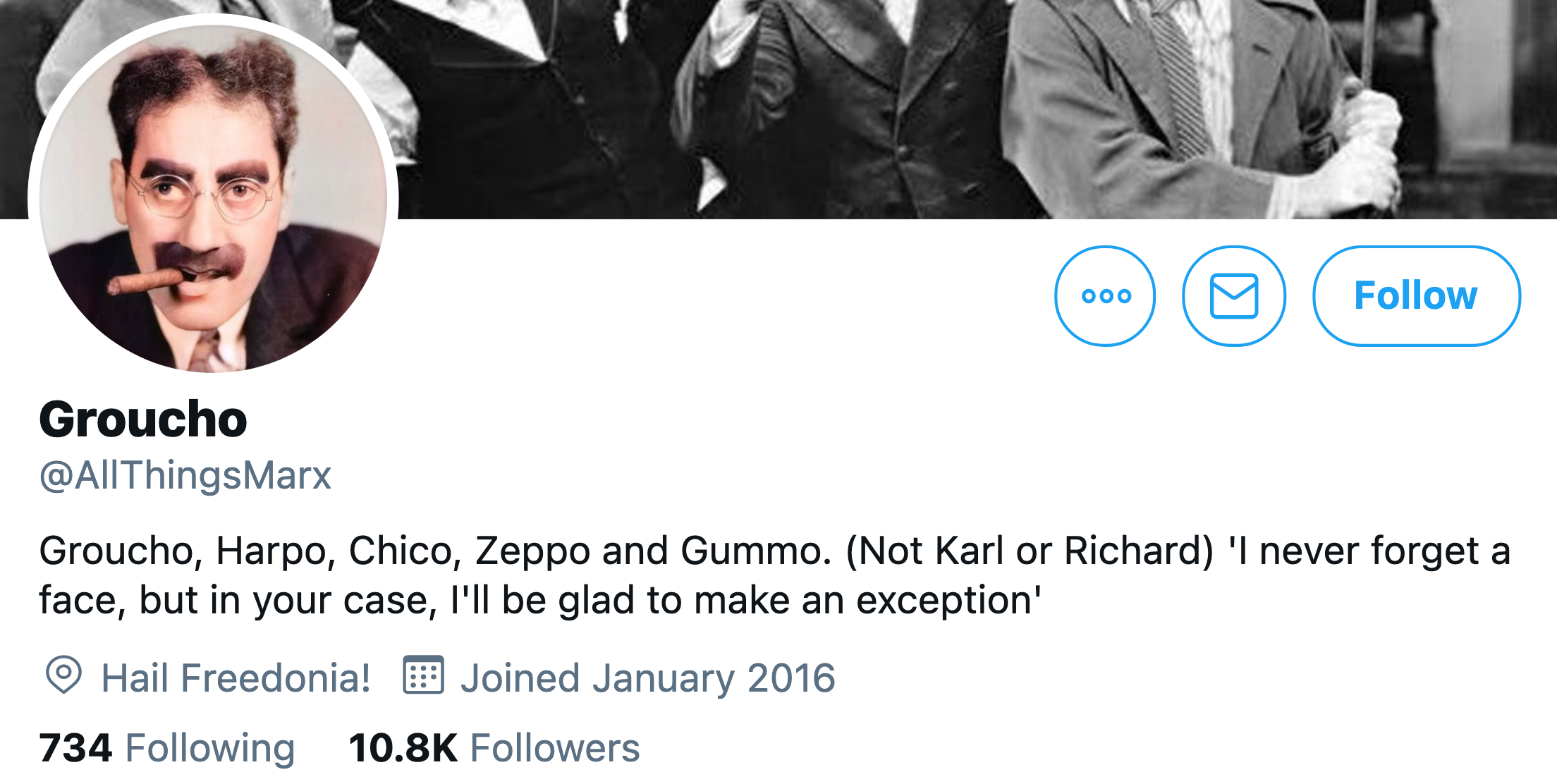}}
\vspace{10pt}
\end{subfigure}
\vspace{20pt}
\begin{subfigure}[t]{0.7 \textwidth}
\caption{Features of the relationship of two Twitter users.}
\vspace{-1pt}
\label{fig:screenshots_relationships}
\fbox{\includegraphics[width = \linewidth]{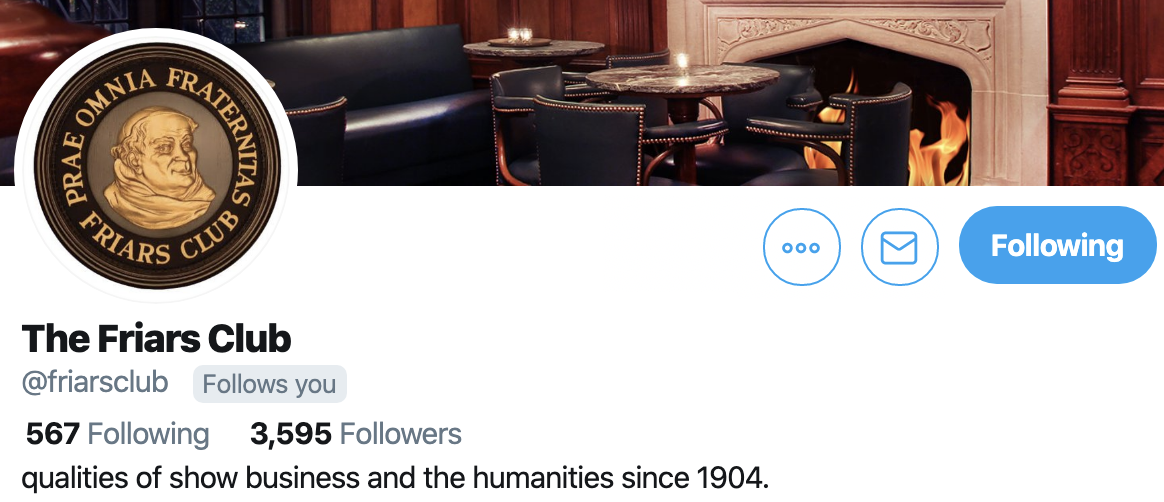}}
\end{subfigure}
\vspace{20pt}
\begin{subfigure}[t]{0.85 \textwidth}
\caption{Features of a tweet.}
\vspace{-3pt}
\label{fig:screenshots_tweet}
\fbox{\includegraphics[width = \linewidth]{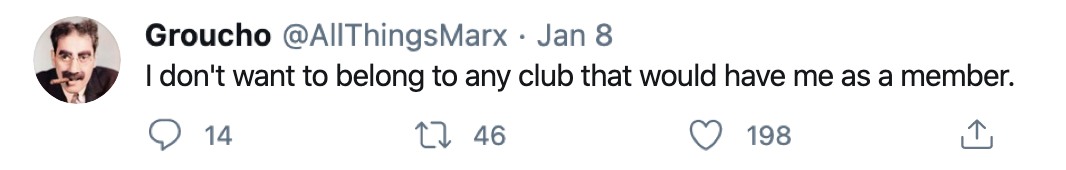}}
\end{subfigure}
\end{center} 
{\footnotesize \begin{minipage}{0.98 \linewidth}
\emph{Notes:}
This figure reports screenshots depicting core elements of Twitter's information design.
The top panel is a screenshot of the home page of a Groucho Marx parody Twitter account.
The numbers next to ``Following'' and ``Followers'' denote the number of followees and followers of this user.
The middle panel is a screenshot of the home page of The Friars Club Twitter account.
The page indicates prominently that ``The Friars Club'' follows the user, and that the user follows The Friars Club, creating a bidirectional following relationship.
Note that this relationship may have been either formed organically, or may be a result of the two users engaging in attention bartering.
The bottom panel is a screenshot of a tweet. 
The number of comments, retweets, and likes associated with that tweet, appear below the content of the tweet.
\end{minipage} }
\end{figure}
When a user views other users' profile pages, she can see whether she is following them, and whether they are following her.
Figure~\ref{fig:screenshots_relationships} shows an example of how other users' profiles are viewed when logged in as a user. 
In this example, the two users follow one another, indicated by the ``Following'' and ``Follows you'' tags.

A user can interact with any tweet that is visible to her by ``liking'' it, ``commenting'' on it by generating another tweet,  and ``retweeting'' it---which makes the original tweet appear on the  timeline of her followers.
The user who generated the original tweet receives notifications about the interaction other users have with her tweets. 
Figure~\ref{fig:screenshots_tweet} shows a tweet that has received 14 comments, 46 retweets, and 198 likes.

\OCD
\section{A model of social media production and consumption} 
\label{sec:model}

\subsection{Setup}
Consider a unit mass of Twitter users.
Each user is endowed with publicly known ability $\alpha \in [0,1]$, and generates tweets of quality $q(\alpha)$, where $q$ is some strictly increasing function.
For ease of exposition, we assume throughout the rest of the paper that $q(\alpha)= \alpha$, and that abilities are uniformly distributed on the unit interval.

Users obtain consumption utility by following others and consuming their tweets.
The consumption utility of a user is the sum of the tweet quality of her followees, minus an opportunity cost $q_0 \in [0,1]$ for each followee.
All users face the same opportunity cost to consumption.
With these assumptions, users have homogeneous consumption preferences. 

Users also obtain attention utility when their tweets are consumed by their followers.
Attention utility increases in the number of a user's followers, exhibits diminishing returns, and does not depend on the identity of the followers---only the number of followers matters.
Formally, attention utility is a function $I: [0,1]\rightarrow[0,1]$ such that $I' > 0$ and $I'' < 0$.

Users make an extensive margin decision---whether to tweet.
We assume that users generate tweets only if they can obtain positive attention utility; 
users without followers ``lurk,'' consuming but not generating tweets.
We assume that users have homogeneous attention preferences.
As such, the model assumes that no horizontal differentiation exists, and that users are identical except in their abilities.
We extend our model to allow for heterogeneous consumption and attention preferences in Section~\ref{ssec:heterogeneity}.

The coexistence of consumption and attention utilities gives rise to two distinct follower types: organic and reciprocal followers.
An organic follower is a user who follows some other user that increases her consumption utility.
Because preferences are homogeneous, only users with ability $\alpha > q_0$ have organic followers.
Users with ability $\alpha < q_0$ do not have organic followers, but may obtain followers by engaging in attention bartering.

A reciprocal follower is a user with ability $\alpha < q_0$  that follows a user with ability $\alpha' < q_0$,  only if that user follows her back.
In reciprocal following relationships, users engage in attention bartering,  trading off decreases in consumption utility for increases in attention utility.
In addition to the consumption utility decrease, we assume that users also incur a cost $c > 0$ for every reciprocal following relationship they engage in.
Because of the incentive to defect, users have to exert effort to maintain each reciprocal relationship---the parameter $c$ captures the cost of this effort.\footnote{For example, a user engaging in attention bartering may  periodically monitor that her reciprocal followers are holding their end of the bargain, and have not stopped following her back.}

We consider the equilibria under two platform configurations: 
(i) when attention bartering is impossible, which happens when follower information is not public and hence non-contractible, and
(ii) when attention bartering is possible, which happens when follower information is public.

\subsection{Equilibria}
Let $f_\alpha:[0,1] \rightarrow \{0,1\} $ denote the (follow) strategy of user $\alpha$, such that $f_\alpha(x) = 1$ if she follows users with ability $x$, and $f_\alpha(x) = 0$ otherwise. 
The set of strategies of a user is ${F}$, and the set of  strategy profiles is $\mathcal{F} = \times_{\alpha \in [0,1]} F$.
For a  profile  $f = (f_\alpha , f_{-\alpha}) \in \mathcal{F}$, we denote by $f_{-\alpha}$ the profile of all users with ability not equal to $\alpha$.
Note that because users have homogeneous preferences, we can refer to them by their abilities---in what follows, we refer to a user with ability $\alpha$  as ``user $\alpha$,'' whenever it is convenient.

Given profile $f$, user $\alpha$ 
(i) follows $n_\alpha(f)  = \int_0^1 f_\alpha(x) dx$ users, and obtains consumption utility  $U_\alpha(f) = \int_0^1 f_\alpha(x) \left( x - q_0 \right) dx$, 
(ii) is followed by $m_\alpha(f) = \int_0^1 f_x(\alpha) dx$ users, and obtains attention utility $I_\alpha(f) = I(m_\alpha(f))$, and
(iii) attempts to engage in $r_\alpha(f) = \int_{0}^{q_0} f_\alpha(x) dx$ reciprocal following relationships, and incurs monitoring cost $C_\alpha(f) = r_\alpha(f) c$.
User $\alpha$ obtains total utility equal to
\begin{equation}
	V_\alpha(f) = U_\alpha(f) + I_\alpha(f) - C_\alpha(f).
\end{equation} 

We restrict our attention to profiles any two users $\alpha, \alpha' \leq q_0$ will either engage in reciprocal following, or will not follow each another.
Formally, we consider profiles in the set  
\begin{equation}
\label{eq:reciprocal_following}
\mathcal{F}^+ = \{f \in \mathcal{F}: 
	f_{\alpha}(\alpha') = f_{\alpha'}(\alpha) \quad  \forall \alpha < q_0 \;\; \forall \alpha' < q_0\}.
\end{equation}
A profile $f^*\in \mathcal{F}^+$ is an equilibrium  profile if 
\begin{equation}
\label{eq:solution_concept}
    V_\alpha(f^*) \geq V_\alpha(f) \; \; \forall \alpha \;\; \forall f \in \mathcal{F}^+.
\end{equation}
It is worth noting that by restricting feasible profiles to $\mathcal{F}^+$, our equilibrium notion assumes a user who considers unfollowing a reciprocal follower takes into account that she will be ``unfollowed back.''
Without this restriction, no attention bartering would take place in equilibrium, as users are always better off unilaterally deviating from a reciprocal following relationship.
As such, we assume that cooperation is sustainable due to the ``repeated'' nature of the underlying game, and because forming and disengaging from following relationships is costless on social media.
Furthermore, note that Equation~\eqref{eq:reciprocal_following} requires that no coalition of users can block an equilibrium profile.

Consider a profile $f \in \mathcal{F}^+$ where users $\alpha, \alpha' \leq q_0$ reciprocally follow each other, and a  profile $f' \in \mathcal{F}^+$ where they do not, but is otherwise identical to $f$.
The change in the utility of user $\alpha$ from unfollowing reciprocal follower $\alpha'$ is:
\begin{align}
	V_\alpha(f') - V_\alpha(f) = 
	\underbrace{U_\alpha(f') - U_\alpha(f)}_{\mbox{{\footnotesize $\Delta U_\alpha > 0$}}} + 
	\underbrace{I_\alpha(f') - I_\alpha(f)}_{\mbox{{\footnotesize $\Delta I_\alpha < 0$}}} + 
	\underbrace{C_\alpha(f') - C_\alpha(f)}_{\mbox{{\footnotesize $\Delta C_\alpha > 0$}}}.
\end{align}
Disengaging from a reciprocal relationship leads to an increase in consumption utility, a decrease in attention utility, and a decrease in monitoring costs; engaging in a reciprocal relationship has the opposite effects.
It is this fundamental trade-off that our paper examines.

\subsection{Attention bartering is impossible} \label{sec:no_info}
We first examine the equilibrium outcomes in a platform where attention bartering is impossible. 
This can be thought of as the case of a social media platform where follower information is hidden, and hence non-contractible.
In terms of the parameters of our model, no attention bartering takes place when the monitoring cost $c$ is so large that reciprocal following always decreases a user's total utility.

When attention bartering is impossible, it is straightforward to show that every user follows only those users that increase her consumption utility, i.e., users with ability $\alpha>q_0$.
\begin{proposition} \label{prop:no_info_equilibrium}
In a platform without attention bartering, there exists a unique equilibrium $f^*$ such that for all $\alpha$, $f^*_\alpha(x) = 1$ for $x>q_0$, and $f^*_\alpha(x) = 0$ otherwise.
\end{proposition}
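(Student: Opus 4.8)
The plan is to prove that, once $c$ lies in the range where reciprocal following always lowers total utility, the strategy ``follow exactly the users with $x>q_0$'' is essentially a dominant choice for every user, so that $f^*$ is the only profile in which everyone is optimizing. I would exploit three structural facts: consumption utility $U_\alpha(f)=\int_0^1 f_\alpha(x)(x-q_0)\,dx$ is additively separable across followees $x$; the monitoring cost $C_\alpha(f)=c\int_0^{q_0} f_\alpha(x)\,dx$ loads only on followees with $x<q_0$; and a user's attention utility $I_\alpha$ depends on her own follow decisions only through reciprocal relationships, which by definition form only among users of ability below $q_0$.

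First I would check $f^*\in\mathcal F^+$: for any pair $\alpha,\alpha'<q_0$ both $f^*_\alpha(\alpha')$ and $f^*_{\alpha'}(\alpha)$ equal $0$, so the reciprocity constraint holds trivially. Then, fixing a user $\alpha$ and any feasible deviation (recalling that within $\mathcal F^+$ a low-ability user can add a link to another low-ability user only if that user follows back), I would bound the utility change by splitting the change in $f_\alpha$ over $\{x>q_0\}$ and $\{x<q_0\}$. On $\{x>q_0\}$: each such followee adds $x-q_0>0$ to consumption, nothing to the monitoring cost, and nothing to $I_\alpha$ (no reciprocity is triggered), so following all of them is strictly optimal, whatever others do. On $\{x<q_0\}$: following such users costs $q_0-x>0$ in consumption and $c$ in monitoring apiece and can help only via attention; but the attention gain requires reciprocation, and a positive-mass set $S\subseteq[0,q_0)$ of reciprocal partners changes $\alpha$'s utility by $I(|S|)-I(0)-\int_S(q_0-x)\,dx-c|S|$, which is negative by the maintained no-bartering hypothesis, while a null set of such links changes nothing and is pinned down by the convention $f^*_\alpha(x)=0$. (For a user with $\alpha>q_0$ the low-ability side is even starker, since $\mathcal F^+$ forces no reciprocation and the change is an unambiguous loss.) Hence $V_\alpha(f^*)\geq V_\alpha(f)$ for every $\alpha$ and every feasible $f$, so $f^*$ is an equilibrium.

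For uniqueness I would note that both halves of the argument were made against an arbitrary profile of the other users, not just $f^*_{-\alpha}$: ``follow every $x>q_0$'' is strictly dominant, and ``follow no positive-mass set of $x<q_0$'' follows from the no-bartering hypothesis alone. So in any equilibrium every user must play $f_\alpha(x)=\mathbbm{1}[x>q_0]$ for almost every $x$, which is $f^*$ up to the inevitable measure-zero ambiguity. The only place the continuum structure bites — and the one step needing care — is that a single extra link has measure zero and hence never moves $I_\alpha$; this is precisely why the no-bartering condition must be (and is) phrased as ``reciprocal following always decreases total utility'' over positive-mass sets of relationships rather than link by link, and it is what makes ruling out attention bartering in this regime immediate.
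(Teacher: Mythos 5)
Your proof is correct: the paper itself gives no proof of this proposition (it is dismissed as ``straightforward to show''), and your argument---additive separability of consumption utility across followees, the observation that attention utility responds to one's own follow choices only through the reciprocity constraint among users below $q_0$, and the no-bartering hypothesis applied to positive-mass sets---is exactly the fleshed-out version of the dominance argument the authors are implicitly relying on. Your care about the measure-zero ambiguity in both existence and uniqueness is a point the paper glosses over, but it does not change the substance.
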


Figure~\ref{fig:no_info_uniform} illustrates the equilibrium of a platform where attention bartering is impossible, for the case of $I(x) = \sqrt{x}/2,$ $q_0 = 0.8$, and $c=0.2$.
Every user obtains  consumption utility 
\begin{equation}
	U_\alpha(f^*) = \int_{q_0}^{1}  x - q_0 \; dx = \frac{1}{2}(1-q_0)^2.
\end{equation}
Users with ability $\alpha \leq q_0$ lurk---they are not followed by anyone, and hence they do not generate tweets---and obtain total utility equal to their consumption utility, i.e., $V_\alpha(f^*)= U_\alpha(f^*)$.
Users with ability $\alpha > q_0$ are followed by every user, generate tweets,  and obtain  total utility $V_\alpha(f^*)= U_\alpha(f^*) + I(1)$.
\begin{figure}[h!]
\begin{center} 
\caption{Following, consumption, and attention with homogeneous user preferences.
\label{fig:individual_outcomes_unif}}
\vspace*{0.15in}
\begin{subfigure}[t]{1 \textwidth}
\caption{Equilibrium when attention bartering is impossible.}
\label{fig:no_info_uniform}
	\begin{subfigure}[t]{0.5 \textwidth}
  		\includegraphics[width = \linewidth]{./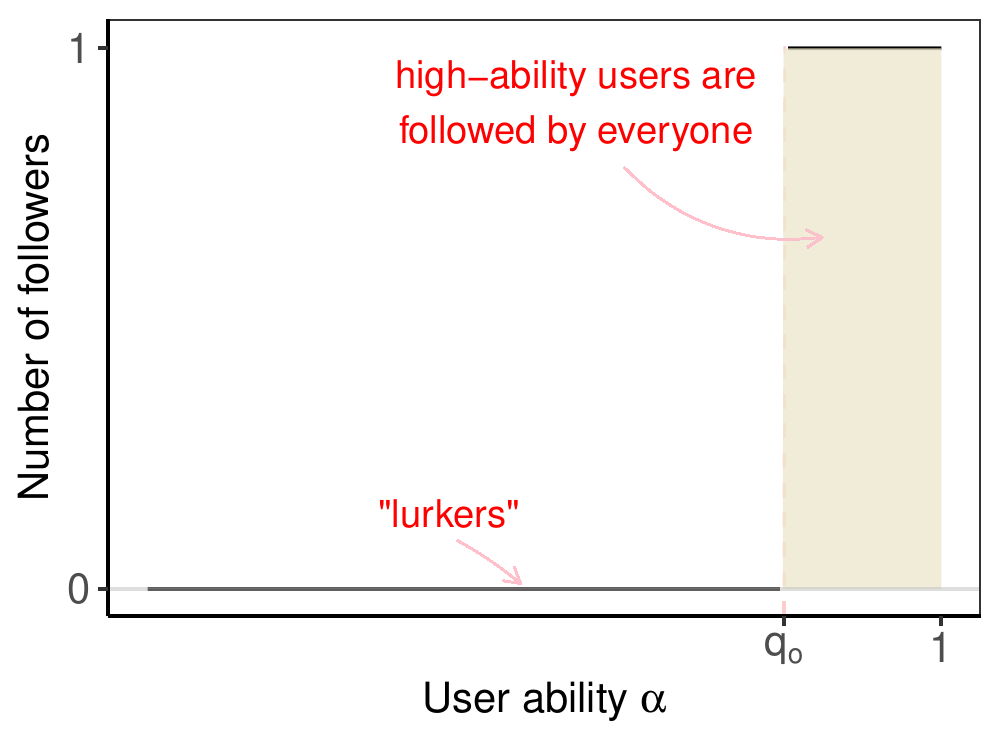}
		\label{fig:no_info_uniform_followers}
	\end{subfigure}
	\begin{subfigure}[t]{0.5 \textwidth}
  		\includegraphics[width = \linewidth]{./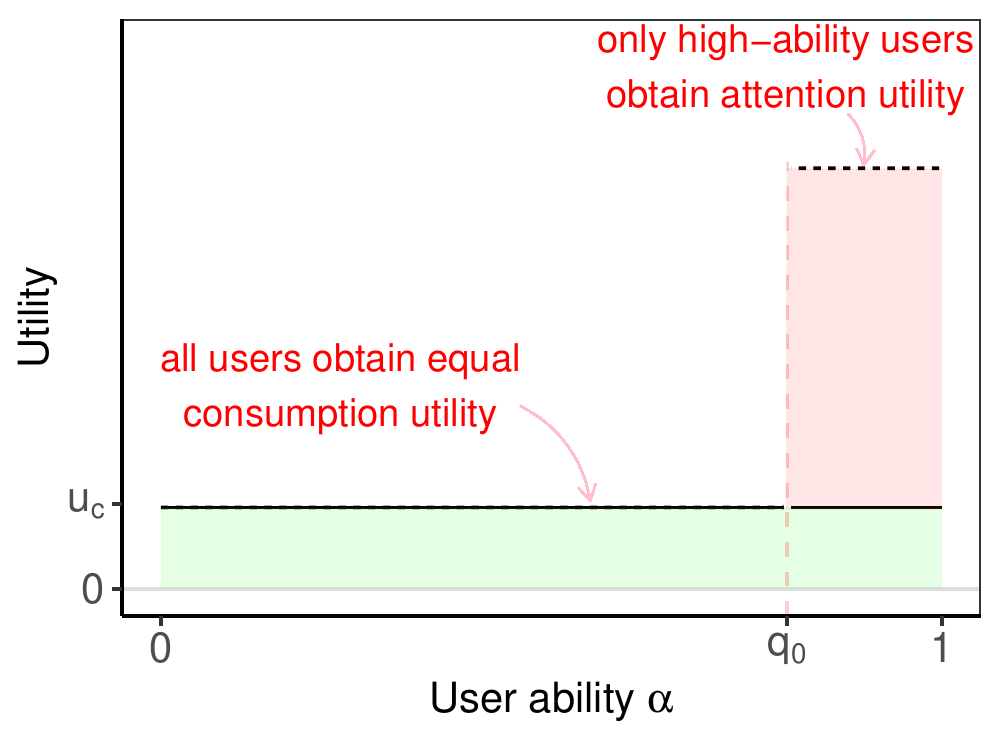}
  		\label{fig:no_info_uniform_utilities} 
	\end{subfigure}
\end{subfigure}

\vspace*{0.15in}
\begin{subfigure}[t]{1 \textwidth}
\caption{Equilibrium when attention bartering is possible.}
\label{fig:info_uniform}
	\begin{subfigure}[t]{0.5 \textwidth}
  		\includegraphics[width = \linewidth]{./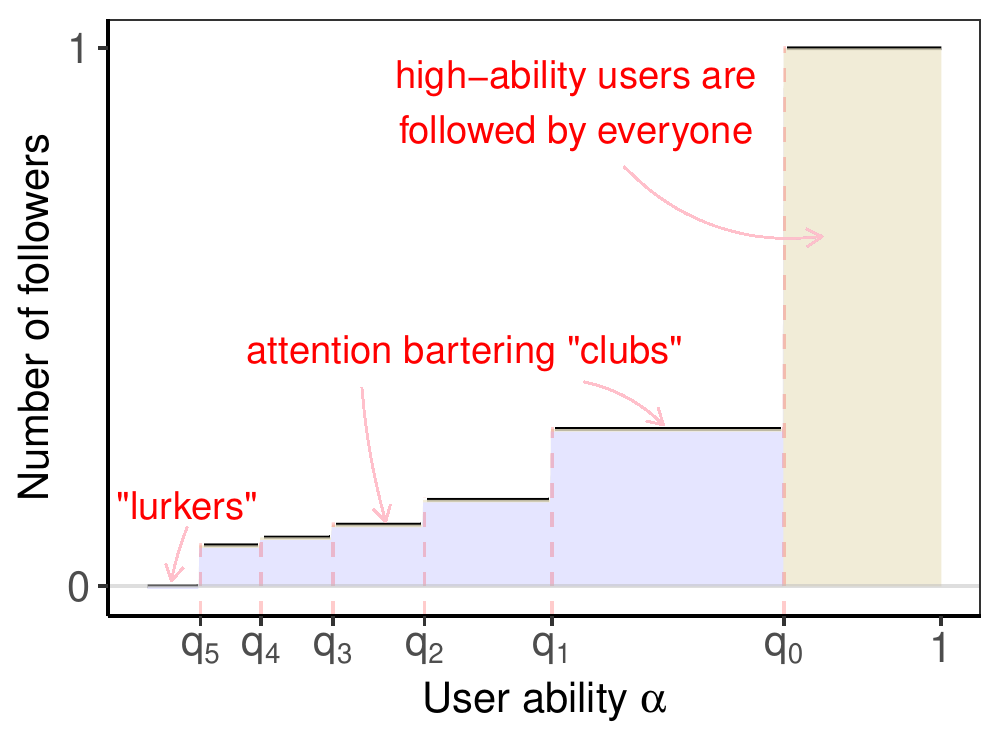}
		\label{fig:info_uniform_followers}
	\end{subfigure}
	\begin{subfigure}[t]{0.5 \textwidth}
  		\includegraphics[width = \linewidth]{./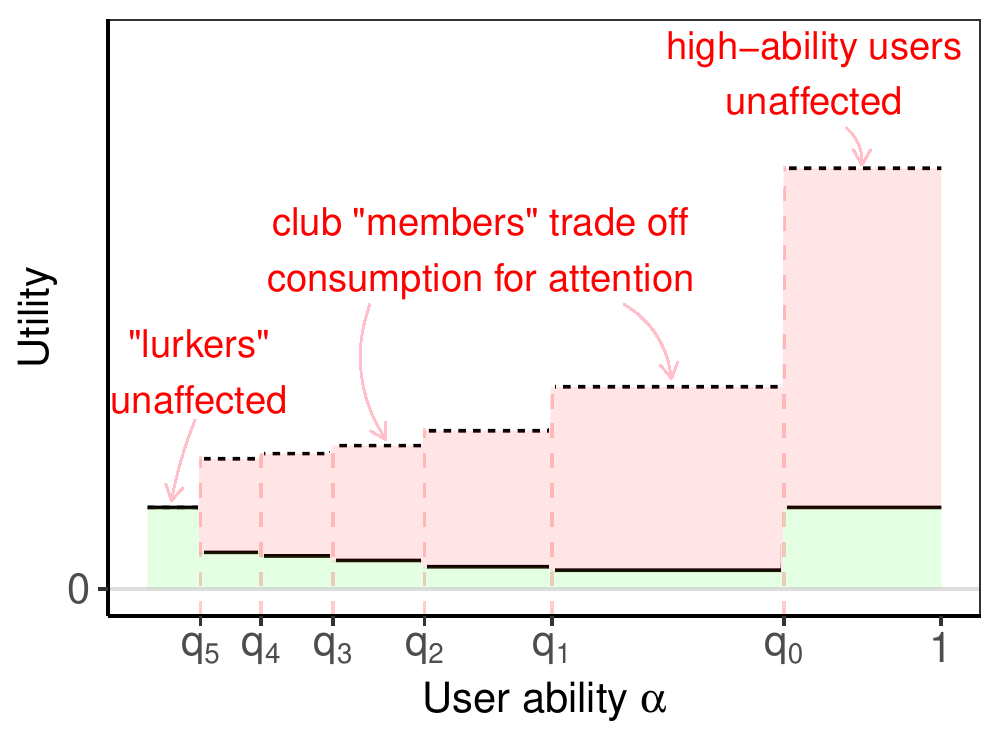}
  		\label{fig:info_uniform_utility} 
	\end{subfigure}
\end{subfigure}
\end{center} 
{\footnotesize \begin{minipage}{1 \linewidth}
\vspace*{-0.0in}
\emph{Notes:}
This figure plots users' followers, consumption utilities, and attention utilities, with and without attention bartering.
It illustrates the case of uniformly distributed valuations on the unit interval, $q_0 = 0.8$, and $I(x)=\sqrt{x}/2$.
In all panels, 
the yellow-shaded area depicts the number of organic followers, 
the blue-shaded area depicts the number of reciprocal followers,
the green-shaded area depicts consumption utility minus monitoring costs, 
and the red-shaded area depicts attention utility.
Panel~\ref{fig:no_info_uniform} plots the users' followers (left panel) and utilities (right panel) in a platform where attention bartering is impossible.
Only high-ability users are followed, and those users are followed by every user.
Every user obtains the same consumption utility, but only high-ability users obtain attention utilities.
Panel~\ref{fig:info_uniform} plots the users' followers (left panel) and utilities (right panel) in a platform with attention bartering.
The monitoring cost parameter is set to  $c=0.2$.
Attention bartering clubs emerge, in which users trade off decreases in consumption utility for increases in attention utility.
The number of lurkers decreases, and both lurkers and high-ability users experience no utility change. 
Figure~\ref{fig:econtwitter_vs_theory_statistics} reports additional quantities for the attention bartering equilibrium.   
\end{minipage} }
\end{figure}

\subsection{Attention bartering is possible}
\label{sec:public}
We next examine the equilibrium outcomes when attention bartering is possible---this is case when follower information is publicly accessible, thus enabling reciprocal following.
In terms of our model parameters, the monitoring cost $c$ is small enough that some users are incentivized to engage in attention bartering.
 
To derive the equilibrium in this case, we need to make a number of observations.
A first observation is that organic following is invariant to whether attention bartering takes place---no user is better off not following high-ability users.
Formally, for any equilibrium profile $f^*$, $f^*_\alpha(x)=1$ for all $x>q_0$.
As such, users with ability $\alpha>q_0$ are followed by every user, do not engage in reciprocal following, and obtain the same consumption and attention utilities as in the case where attention bartering is impossible.
Only users with ability $\alpha\le q_0$ have incentive to engage in reciprocal following. 
Proposition~\ref{prop:info_equilibrium_continuum} derives the structure of optimal reciprocation.
\begin{proposition}  \label{prop:info_equilibrium_continuum}
	In any equilibrium $f^*$, users with ability $\alpha \leq q_0$ will reciprocally follow users of abilities in a continuum $(\underline{q}(\alpha), \overline{q}(\alpha)]$, that is, $f^*_\alpha(x) = 1$ if $x \in (\underline{q}(\alpha), \overline{q}(\alpha)]$, and $f^*_\alpha(x) = 0$ otherwise.
	Furthermore, for any users $\alpha_1$ and $\alpha_2$ such that $\alpha_1>\alpha_2$, $\underline{q}(\alpha_1) \geq \underline{q}(\alpha_2)$ and $\overline{q}(\alpha_1) \geq \overline{q}(\alpha_2)$. 
\end{proposition}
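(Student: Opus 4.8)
The plan is to show that \emph{every} equilibrium is summarized by a single non-decreasing ``selectivity threshold'' function, from which both the interval structure and the monotonicity of the endpoints are immediate. Fix an equilibrium $f^{*}\in\mathcal{F}^{+}$. By the observation established just above (organic following is unaffected by bartering, so $f^{*}_{\alpha}(x)=1$ for all $x>q_{0}$), only the reciprocal part of $f^{*}$ is at issue; for $\alpha\le q_{0}$ let $S_{\alpha}=\{x\le q_{0}:f^{*}_{\alpha}(x)=1\}$. Because $f^{*}\in\mathcal{F}^{+}$ this relation is symmetric ($x\in S_{\alpha}\iff\alpha\in S_{x}$), and since same-ability users are interchangeable, $\alpha$'s follower count equals $m_{\alpha}=\mu(S_{\alpha})$ (Lebesgue measure), so her reciprocal payoff is $\int_{S_{\alpha}}(x-q_{0})\,dx+I(m_{\alpha})-c\,\mu(S_{\alpha})$.

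\textbf{Step 1: threshold characterization.} Using the standard marginal argument for a continuum, adding (resp.\ dropping) a small mass $\varepsilon$ of reciprocal followees of ability near $x$ changes $\alpha$'s utility by $\varepsilon\bigl[(x-q_{0})+I'(m_{\alpha})-c\bigr]+o(\varepsilon)$, because by symmetry her follower mass moves by $\varepsilon$ as well. Set $t(\alpha):=q_{0}+c-I'(m_{\alpha})$. Since an unfollow is always a feasible deviation (it is met by an unfollow-back, keeping the profile in $\mathcal{F}^{+}$), optimality of $f^{*}$ forces $x\ge t(\alpha)$ for a.e.\ $x\in S_{\alpha}$, and by symmetry $\alpha\ge t(x)$ as well. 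Conversely, if $x\le q_{0}$ satisfies $x>t(\alpha)$ and $\alpha>t(x)$ but $x\notin S_{\alpha}$, then both sides would strictly gain from forming the link, a Pareto-improving reorganization of reciprocal relationships that the equilibrium notion excludes (this is exactly the ``no coalition can block'' content of restricting to $\mathcal{F}^{+}$). Hence, up to a measure-zero convention at the endpoints, $x\in S_{\alpha}\iff\bigl(x\le q_{0},\ x>t(\alpha),\ \text{and}\ \alpha>t(x)\bigr)$.

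\textbf{Step 2: $m_{\alpha}$ and $t$ are non-decreasing.} Suppose $\alpha_{1}>\alpha_{2}$ but $m_{\alpha_{1}}<m_{\alpha_{2}}$. Since $I''<0$, $I'$ is strictly decreasing, so $I'(m_{\alpha_{1}})>I'(m_{\alpha_{2}})$ and thus $t(\alpha_{1})<t(\alpha_{2})$. Then any $x\in S_{\alpha_{2}}$ has $x>t(\alpha_{2})>t(\alpha_{1})$ and $\alpha_{1}>\alpha_{2}>t(x)$, so $x\in S_{\alpha_{1}}$ by Step 1; hence $S_{\alpha_{2}}\subseteq S_{\alpha_{1}}$ and $m_{\alpha_{2}}\le m_{\alpha_{1}}$, contradicting the supposition. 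So $\alpha\mapsto m_{\alpha}$ is non-decreasing, and since $-I'$ is increasing, $t(\alpha)=q_{0}+c-I'(m_{\alpha})$ is non-decreasing in $\alpha$.

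\textbf{Step 3: interval structure, endpoint monotonicity, and the hard part.} Because $t$ is non-decreasing, $\{x\le q_{0}:t(x)<\alpha\}$ is a down-set, hence an interval $[0,\overline{q}(\alpha))$ with $\overline{q}(\alpha):=\sup\{x\le q_{0}:t(x)<\alpha\}\le q_{0}$; intersecting with the condition $x>t(\alpha)$ from Step 1 gives $S_{\alpha}=(t(\alpha),\overline{q}(\alpha)]$ up to the endpoint convention, so $\underline{q}(\alpha):=t(\alpha)$ and $\overline{q}(\alpha)$ are the claimed thresholds (the interval being empty precisely for lurkers). For $\alpha_{1}>\alpha_{2}$, non-decreasingness of $t$ gives $\underline{q}(\alpha_{1})=t(\alpha_{1})\ge t(\alpha_{2})=\underline{q}(\alpha_{2})$, while $\{x:t(x)<\alpha_{2}\}\subseteq\{x:t(x)<\alpha_{1}\}$ gives $\overline{q}(\alpha_{1})\ge\overline{q}(\alpha_{2})$, completing the proof. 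I expect the only genuinely delicate step to be Step 1: justifying that the follower mass may be treated as a smooth choice variable in the continuum, that ``gaps'' in $S_{\alpha}$ are ruled out by the paper's solution concept rather than merely by unilateral deviations, and handling the knife-edge abilities (including the interface with organic following at $x=q_{0}$) at which the marginal net benefit vanishes.
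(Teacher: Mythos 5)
Your proof is correct (at the level of rigor the paper itself operates at) but takes a genuinely different route from the paper's. The paper argues by a discrete swap: if a user reciprocally follows some $\alpha_3$ but skips a higher-ability $\alpha_2 \in (\alpha_3, q_0]$, exchanging $\alpha_3$ for $\alpha_2$ holds the follower count---and hence attention utility and monitoring cost---fixed while strictly raising consumption utility; the endpoint monotonicity is then asserted ``for the same reason,'' namely that anyone willing to reciprocate with a given user is willing to reciprocate with any higher-ability barterer. You instead work at the margin, extracting the explicit selectivity threshold $t(\alpha) = q_0 + c - I'(m_\alpha)$ and deriving the interval structure and both monotonicity claims from the monotonicity of $t$. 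Your route buys more: the threshold formula is exactly the first-order condition that pins down $\underline{q}$ in the construction of Proposition~\ref{prop:info_equilibrium_structure}, you obtain monotonicity of follower counts as a byproduct (part of Proposition~\ref{prop:attention_bartering_properties_homo}), and you make explicit where the coalition-blocking content of the solution concept is needed, namely to rule out mutually profitable unformed links---a point the paper's swap leaves implicit, since the swap requires $\alpha_2$ to follow back, which a purely unilateral deviation cannot deliver. The paper's route is more elementary: by comparing only profiles with equal follower mass it never invokes $I'$ or the a.e.\ bookkeeping your marginal argument requires. The delicate points you flag yourself (treating follower mass as a smooth choice variable, endpoint conventions, knife-edge abilities) are real but no worse than the informality already present in the paper's own proofs.
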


\begin{proof}  
Consider any equilibrium where user $\alpha_1$ engages in reciprocal following, but does not employ a continuum strategy.
This implies that user $\alpha_1$ does not reciprocally follow some user $\alpha_2$, but reciprocally follows some user $\alpha_3 < \alpha_2$. 
User $\alpha_1$ would then be better off reciprocally following user $\alpha_2$, and incur a smaller consumption decrease.
For the same reason, all users with ability higher than $\alpha_1$ that reciprocally follow user $\alpha_2$ will also reciprocally follow user $\alpha_1$; this proves the second part of the proposition.
\end{proof}
\noindent Under profile $f$, a user $\alpha$ who reciprocally follows increases her utility by
\begin{equation}
	R_\alpha(f) = R \left (\underline{q}(\alpha),\overline{q}(\alpha) \right ) 
	= \underbrace{\int_{\underline{q}(\alpha)}^{\overline{q}(\alpha)} x - q_0 dx}_{\mbox{{\footnotesize consumption util. decrease}}} - 
	\underbrace{\int_{\underline{q}(\alpha)}^{\overline{q}(\alpha)} c dx}_{\mbox{{\footnotesize monitoring costs}}} + 
	\underbrace{ \vphantom{\Bigg [}I \left ( \overline{q}(\alpha) - \underline{q}(\alpha) \right )}_{\mbox{{\footnotesize attention util. increase}}}\!\!\!\! ,
\end{equation} 
for a total utility $ V_\alpha(f) = U_\alpha(f) + R_\alpha(f)$.
Proposition~\ref{prop:info_equilibrium_structure}  constructs the unique equilibrium.
\begin{proposition} \label{prop:info_equilibrium_structure}
	In a platform with attention bartering, there exists a unique equilibrium $f^*$, that can be computed as follows
	\begin{enumerate}
		\item For any $\alpha$, $f^*_\alpha(x) = 1$ for all $x>q_0$. 
		\item For users with ability $\alpha > q_0$, $f_\alpha^*(x)=0$ for all $x \leq 0$.
		\item Set $\alpha_1 = q_0$. 
		For all users with ability $\alpha_1$, $f_\alpha^*(x) = 1$ iff $x \in (\underline{q}(\alpha_1), \overline{q}(\alpha_1)]$, where $\overline{q}(\alpha_1)=\alpha_1$ and $\underline{q}(\alpha_1) = \argmax_{ x \in [0, \overline{q}(\alpha_1)] } \left \{ R \left (x,\overline{q}(\alpha_1) \right ) \right \}$.		Users with ability $\alpha \in (\underline{q}(\alpha_1), \overline{q}(\alpha_1)]$ reciprocally follow each other, and only each other.
		\item  Iterate by setting $\alpha_{i+1} = \underline{q}(\alpha_i)$, and repeating the previous step.
		\item The process stops when either $\underline{q}(\alpha_i) = \overline{q}(\alpha_i)$, or $\underline{q}(\alpha_i) = 0 $.
	\end{enumerate}
\end{proposition}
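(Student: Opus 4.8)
The statement bundles three claims: the algorithm is well defined (the $\argmax$ at each step is unique), the profile it produces is an equilibrium, and it is the only equilibrium. I would prove all three by collapsing every decision to the scalar trade-off $R(\underline q,\overline q)$ and exploiting its concavity, working from the top of the ability distribution downward.

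\textbf{Step 1: reduce to $[0,q_0]$ and to a club structure.} First I would invoke the first observation already recorded in the text: in any equilibrium every user follows every $x>q_0$, since dropping such a followee strictly lowers consumption utility and changes nothing else. Hence users with $\alpha>q_0$ are followed by the whole unit mass, obtain $m_\alpha=1$, and — because $I$ is maximized at its largest admissible argument while reciprocal following only adds consumption loss and monitoring cost — strictly prefer to reciprocally follow no one; this gives items 1 and 2 (reading item 2 as ``$f^*_\alpha(x)=0$ for $x\le q_0$''). What remains is the reciprocal behaviour on $[0,q_0]$. By Proposition~\ref{prop:info_equilibrium_continuum} the reciprocal followees of each $\alpha\le q_0$ form an interval $(\underline q(\alpha),\overline q(\alpha)]$ with $\underline q,\overline q$ non-decreasing, and the $\mathcal F^+$ restriction makes ``$\alpha$ reciprocally follows $\alpha'$'' a symmetric relation; together with monotonicity this forces $\underline q,\overline q$ to be piecewise constant, so the active part of $[0,q_0]$ is partitioned into ``clubs'' on which both envelopes are constant and every member follows every other member.

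\textbf{Step 2: concavity of $R$, well-definedness, and a shift fact.} Writing a club as $(\overline q-s,\overline q]$ of ``size'' $s\in[0,\overline q]$,
\[
R(\overline q-s,\overline q)=\int_{\overline q-s}^{\overline q}\!(x-q_0-c)\,dx+I(s),
\]
so $\tfrac{d}{ds}R=(\overline q-s-q_0-c)+I'(s)$ and $\tfrac{d^2}{ds^2}R=I''(s)-1<0$. Thus $R(\cdot,\overline q)$ is strictly concave, its maximizer over $[0,\overline q]$ is unique (so every step of the construction is unambiguous), and it is interior unless pinned at $s=0$ — exactly $\underline q(\alpha_i)=\overline q(\alpha_i)$, the club collapses — or at $s=\overline q$ — exactly $\underline q(\alpha_i)=0$; this identifies the two stopping conditions. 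I would also record the ``shift-up'' fact that, for fixed size $s$, $R(\overline q-s,\overline q)$ is strictly increasing in $\overline q$ (its $\overline q$-derivative equals $s$), since the integrand $x-q_0-c$ is increasing; this drives the greedy argument.

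\textbf{Step 3: the constructed profile is an equilibrium, and Step 4: it is the only one.} Let $C_1=(\underline q_1,q_0],\,C_2=(\underline q_2,\underline q_1],\dots$ be the clubs produced by the algorithm, with lurkers below the last. Each $\alpha\in C_i$ follows $(q_0,1]\cup C_i$, is followed exactly by $C_i$, and gets $\tfrac12(1-q_0)^2+R(\underline q_i,\overline q_i)$. I would then rule out blocking: within a club a deviating member's reciprocal set is confined to $C_i$ (no one outside follows her), and among subsets of $C_i$ the best an aligned subgroup can achieve is the interval $(\underline q_i,\overline q_i]$, since $\underline q_i$ maximizes $R(\cdot,\overline q_i)$ and, by the shift fact, top-anchored intervals dominate lower ones of equal size; a coalition straddling two clubs cannot help because the users near $q_0$ are already in the best club with top $q_0$ (the top is $\le q_0$ as no $x>q_0$ reciprocates, and $\underline q_1$ is the $R(\cdot,q_0)$-maximizer), so reorganizing them raises no payoff, and iterating, $C_2$ is the best club formable from $[0,\overline q_2]$, etc.; and the lurkers cannot benefit since at termination the value of the optimal club there is $\le 0$, whence by the shift fact every interval inside the lurker region has $R<0$. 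Uniqueness follows by running this same top-down argument on an arbitrary equilibrium (which by Steps 1--2 is a union of clubs on $[0,q_0]$): the topmost club must have top $q_0$ and bottom $\argmax R(\cdot,q_0)$, then the next must have top $\underline q_1$ and bottom $\argmax R(\cdot,\underline q_1)$, and so on, so it coincides with $f^*$. The main obstacle is Step 1's structural claim together with the blocking arguments made fully rigorous in the continuum — that in an \emph{arbitrary} equilibrium the monotone envelopes are genuinely piecewise constant with no gaps between clubs, and that ``a coalition re-forms a club'' is well posed when single abilities carry zero mass (so every statement must be cast in terms of intervals of abilities and the symmetry imposed by $\mathcal F^+$); the concavity computation and the per-member optimality check are routine once that scaffolding is in place.
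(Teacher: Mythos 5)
Your proposal is correct and follows essentially the same route as the paper's own proof: a top-down greedy construction in which the users just below $q_0$ are the most valuable reciprocators, each club is the $R$-maximizing interval anchored at the previous club's lower edge, and homogeneity forces identical continua within a club. Your explicit computations---strict concavity of $R(\overline q - s,\overline q)$ in $s$ (so the $\argmax$ is unique) and the shift-up fact $\partial R/\partial\overline q = s$ (which formalizes the paper's verbal claim that higher-ability barterers are ``more valuable'')---supply the rigor that the paper's sketch leaves implicit, and you correctly flag the same structural gaps (piecewise-constant envelopes, zero-mass coalitions) that the paper also glosses over.
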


\begin{proof}
	Amongst users who reciprocate, users with ability $\alpha=q_0$ are the most ``valuable'' ones, as they decrease their reciprocal followers' consumption utility by the least amount.
	In any equilibrium profile, users that are reciprocally followed by those users will reciprocate---reciprocally following a higher ability user always is utility-improving for a lower-ability user, so long as it is utility-improving for a higher-ability user.
	As such, any profile where these users do not follow the highest-ability reciprocal followers can be improved upon, and is not an equilibrium profile.
	Because users are homogeneous, if a user belongs to the continuum of a higher-ability user, then that user's continuum is identical.
	Any continuum that stops before marginal utility starts decreasing can be improved upon, and is not an equilibrium.
	Evoking the same arguments, uniqueness is obtained by contradiction.
\end{proof}

In Figure~\ref{fig:no_info_uniform}, we illustrate the platform equilibrium when attention bartering is possible, for the case of $I(x) = \sqrt{x}/2,$ $q_0 = 0.8$, and $c=0.2$.
Users with ability $\alpha>q_0$, are unaffected by  the possibility of attention bartering---they are followed by every other user, and they do not engage in reciprocal following.
The set of lurkers shrinks, as lower-ability users can improve their total utilities by forming attention bartering clubs.
Five attention bartering clubs form in our example, comprising users of abilities in the continua $[q_1,q_0]$, $[q_2,q_1]$,  $[q_3,q_2]$,  $[q_4,q_3]$, and  $[q_5,q_4]$.
Note that the extent of reciprocal following---the size of these clubs---gradually becomes smaller, as reciprocally following lower-ability users is costlier. 
Reciprocation stops when users no longer find it profitable to follow their own type: users with ability $\alpha \leq q_5$ maintain their lurker status, and continue to only consume.\footnote{This can be thought of as the (Groucho) Marx condition: the marginal user who does not engage in attention bartering does not want to be a member of any club that would have her as a member.}
Club members are better off in terms of total utility but their consumption utility decreases, as they are forced to follow users below the opportunity cost $q_0$.
Higher-ability users belong to more populous clubs, and hence incur higher monitoring costs and more pronounced consumption utility decreases.\footnote{
 Interestingly, these clubs have increasing inbreeding homophilly without any notion of costly search \citep{davis1970clustering, currarini2009economic}.
}

The model generates testable predictions, which we summarize in Proposition~\ref{prop:attention_bartering_properties_homo}:
\begin{proposition} 
\label{prop:attention_bartering_properties_homo}
In the equilibrium of a platform where attention bartering takes place,
\begin{enumerate} [label={(\roman*)}]
\item The number of followers and follower-to-followee ratio are weakly increasing in user ability. Only high-ability users have follower-to-followee ratios higher than one. \label{pprop:eqbm_followers_ratio}
\item The number of followees is concave in user ability: it weakly increases initially, and then decreases for high-ability users. High-ability users and lurkers , who follow an equal number of users as lurkers.  \label{pprop:eqbm_following}
\item Lurkers and high-ability users do not engage in attention bartering. Every other user's number of reciprocal followers is positive and weakly increasing in her ability. \label{pprop:eqbm_reciprocation}
\item Users comprising an attention bartering club have the same number of followers, followees, and ratios. The size of the clubs---and hence the magnitude of these statistics---increases weakly in user ability. \label{pprop:eqbm_clubs}
\end{enumerate}
\end{proposition}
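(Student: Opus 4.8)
Every claim is obtained by reading the relevant statistic off the explicit equilibrium $f^*$ constructed in Proposition~\ref{prop:info_equilibrium_structure}, so the plan is mostly bookkeeping together with a single comparative-statics lemma. First I would tabulate the equilibrium. There are three kinds of users. High-ability users ($\alpha > q_0$) are followed organically by the whole unit mass, follow the mass $1-q_0$ of users above $q_0$, and reciprocate with no one. Lurkers---users at or below the ability level at which the iteration of Proposition~\ref{prop:info_equilibrium_structure} terminates---follow the same mass $1-q_0$ and are followed by no one. Club members (abilities in $(q_*, q_0]$, with $q_*$ the stopping point) belong to a club with continuum $(\underline{q}(\alpha), \overline{q}(\alpha)]$; by homogeneity and the continuum structure each such user follows the mass $1-q_0$ organically plus exactly the members of her club, and is followed by exactly the members of her club. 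Writing $s(\alpha) = \overline{q}(\alpha) - \underline{q}(\alpha)$ for the club size, the three statistics are therefore: followers $m_\alpha = 1$ for $\alpha > q_0$, $m_\alpha = s(\alpha)$ for club members, $m_\alpha = 0$ for lurkers; followees $n_\alpha = 1 - q_0$ for high-ability users and for lurkers, $n_\alpha = 1 - q_0 + s(\alpha)$ for club members; reciprocal followers $r_\alpha = s(\alpha)$ for club members and $r_\alpha = 0$ otherwise.

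The one substantive step is the \emph{club-size monotonicity lemma}: $s(\alpha)$ is weakly increasing in $\alpha$. By Proposition~\ref{prop:info_equilibrium_continuum}, $\overline{q}(\alpha)$ is weakly increasing and constant on each club, so it suffices to show that the optimal club size is weakly increasing in its top $b = \overline{q}$. Reparametrizing $R$ by club size $s$ and top $b$, a direct integration gives $R(s,b) = b s - \tfrac12 s^2 - q_0 s - c s + I(s)$, so that $\partial^2 R / \partial s\,\partial b = 1 > 0$; $R$ is thus supermodular in $(s,b)$, and since the feasible set $\{0 \le s \le b\}$ is ascending in $b$ in the strong set order, Topkis's monotone-selection theorem gives that $\argmax_{s\in[0,b]} R(s,b)$ is weakly increasing in $b$. (Strict concavity of $R$ in $s$, from $I'' < 0$, makes this argmax a singleton, consistent with the uniqueness in Proposition~\ref{prop:info_equilibrium_structure}.) Since a higher-ability club has the larger value of $b$, it has weakly larger $s$.

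Given the lemma, the four parts follow by inspection. For \ref{pprop:eqbm_followers_ratio}: $m_\alpha$ runs $0 \to s(\alpha) \to 1$ as ability rises, and $s(\alpha) \le q_0 < 1$, so passing from the top club to the high-ability region is an increase; the ratio $m_\alpha/n_\alpha$ equals $0$ for lurkers, equals $s(\alpha)/(1-q_0+s(\alpha))$ on the bartering range---increasing in $s(\alpha)$, hence in $\alpha$, and always strictly below one---and equals $1/(1-q_0) > 1$ for high-ability users, so only the latter exceed one. For \ref{pprop:eqbm_following}: $n_\alpha$ equals $1-q_0$ for lurkers, rises to $1 - q_0 + s(\alpha)$ through the clubs, then drops back to $1-q_0$ for high-ability users---weakly increasing and then decreasing, with high-ability users following exactly as many accounts as lurkers (the ``followee dip''). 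For \ref{pprop:eqbm_reciprocation}: $r_\alpha = 0$ precisely for lurkers and high-ability users, while $r_\alpha = s(\alpha) > 0$ and is weakly increasing on the bartering range. For \ref{pprop:eqbm_clubs}: all members of a given club follow the same set and are followed by the same set, so they share $m$, $n$, and the ratio, and these common values increase weakly with ability because $s(\alpha)$ does.

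The main obstacle is the club-size monotonicity lemma: in the $(\underline{q},\overline{q})$ parametrization used in the body both endpoints move upward with ability, so the monotonicity of their difference is ambiguous a priori; re-parametrizing by $(s,b)$ and invoking supermodularity is what makes it transparent. Everything else is substitution into the equilibrium of Proposition~\ref{prop:info_equilibrium_structure}, with ``weakly'' the correct qualifier throughout because each statistic is flat within a club and may jump (or dip) at a club boundary.
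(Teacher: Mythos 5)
Your proposal is correct, and it is worth noting that the paper itself offers \emph{no} proof of this proposition at all: it is stated as a summary of ``testable predictions'' and the reader is pointed to Figure~\ref{fig:individual_outcomes_unif} and the empirical comparisons. So your write-up supplies an argument where the paper has only an assertion. The bookkeeping half of your proof (tabulating $m_\alpha$, $n_\alpha$, $r_\alpha$ for lurkers, club members, and users with $\alpha>q_0$ from the equilibrium of Proposition~\ref{prop:info_equilibrium_structure}) matches exactly what the paper implicitly relies on, and your values $m_\alpha\in\{0,s(\alpha),1\}$, $n_\alpha\in\{1-q_0,\,1-q_0+s(\alpha)\}$ are the right ones. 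The genuinely new content is your club-size monotonicity lemma. The paper only remarks informally, after Proposition~\ref{prop:info_equilibrium_structure}, that ``the size of these clubs gradually becomes smaller, as reciprocally following lower-ability users is costlier,'' with no argument; your reparametrization to $R(s,b)=bs-\tfrac12 s^2-q_0s-cs+I(s)$, the observation that $\partial^2R/\partial s\,\partial b=1>0$, and the appeal to Topkis on the ascending feasible set $[0,b]$ is a clean and correct way to pin this down (and the strict concavity $-1+I''(s)<0$ correctly recovers uniqueness of the maximizer, consistent with the iteration $\alpha_{i+1}=\underline{q}(\alpha_i)$ producing successively smaller clubs). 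The only caveat is cosmetic: part~\ref{pprop:eqbm_following} of the statement calls the followee profile ``concave,'' whereas what you (correctly) establish is a step function that weakly rises through the clubs and then drops back to $1-q_0$ at $\alpha>q_0$; your reading of the garbled sentence---that high-ability users follow the same mass $1-q_0$ as lurkers---is the intended one.
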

\noindent We depict these predictions in Figure~\ref{fig:individual_outcomes_unif}, as well as in Figure~\ref{fig:econtwitter_vs_theory_statistics} and Figure~\ref{fig:econtwitter_vs_theory_reciprocation} where we compare the model predictions with estimates from the \ET{} data.

\subsection{Preference heterogeneity} \label{ssec:heterogeneity}
Our basic model considers users that differ only in their abilities.
This can be a reasonable approximation for social media communities, but real-life users are surely heterogeneous in more ways than one.
It is worth examining how introducing additional sources of heterogeneity affects the predictions of the model.

Users may have heterogeneous consumption preferences.
One way to model this type heterogeneity is to introduce a horizontal consumption component to the model, in addition to the vertical component---user abilities.
Appendix~\ref{app:horizontal_model} develops one such extension, where users face idiosyncratic and non-symmetric  opportunity costs to consumption.
If the horizontal component outweighs substantially the vertical component, users of very different abilities may now follow each another, either organically or reciprocally, introducing inter-club connections. 
Importantly, because higher-ability users attract more organic followers, they engage in less reciprocation, and hence they incur less pronounced consumption utility decreases.
However, if consumption preferences correlate strongly with user abilities, attention bartering still takes place predominantly between users of similar abilities.

Users may also differ in how much they value attention.
All other factors equal, a user $\alpha$ that obtains higher attention utility from attracting followers will attempt to engage in reciprocation with users with ability lower than her club's lower bound $\underline{q}(\alpha)$.\footnote{In the extreme, a user whose preference for attention far outweighs her preference for consumption will reciprocally follow any willing user.
These are likely the social media users who openly advertise their ``follow-for-follow'' policies.}
These users will follow her back, as her ability is higher than the abilities of other members of their own club.
Having gained a more valuable reciprocal follower, these users may in turn unfollow their lowest-ability reciprocal followers.
This effect ripples through the population of reciprocal followers, ``smoothing out the edges'' of attention bartering clubs, and introducing inter-club connections.
Introducing heterogeneous monitoring costs has identical effects.

\OCD
\section{Evidence from \ET} \label{sec:empirics}
We next assess empirically core assumptions and predictions of our model.
Toward that end, we collected from \ET{}, a Twitter community comprising  professional and amateur users who tweet mostly about economics, and often follow each other.\footnote{On Twitter, users may prepend the hashtag character to relevant keywords or phrases within their tweets, as a form of tagging that enables cross-referencing of content sharing a theme or subject. 
Twitter communities often derive their names from recurring hashtags that their members use to broadcast messages to the community. 
For more details, see ~\url{https://help.twitter.com/en/using-twitter/how-to-use-hashtags}, and~\url{https://en.wikipedia.org/wiki/Hashtag}.}
Our rationale for focusing on \ET{} is to obtain data that closely matches the setting of our model, that is, data on users with higher preference homogeneity than what would be expected in a random sample of Twitter users.

It is important to emphasize the nature of this empirical exercise.
We are not fitting the \ET{} data to our model.
Instead, what we are doing is comparing the patterns in the \ET{} data---specifically network statistics as a function of user ability---to the patterns predicted in our model.
For example, our model predicts a ``followee dip'' at the highest levels of user ability---we look for this dip in the \ET{} data.
This exercise is somewhat similar to comparing the coefficients in a regression to the comparative statics of a model, except that we are making this comparison at many ability levels, where the signs of some predictions change i.e., followees is increasing in ability up until the dip, at which point the sign changes. 

\subsection{Data collection and sample definition \label{ssec:sample}}
Because there is no ``tag'' on Twitter indicating that a user is an economist, we adopted a heuristic process based on Twitter's list feature.
Lists are groups of Twitter users created and curated by individual users, in order to organize tweets thematically.
Lists can be made private or public, and can be shared with other users.
Importantly, adding a user to a list does not affect follower or followee counts, and hence prospective followers cannot see the number of lists to which a user has been added.\footnote{For more details on Twitter lists, see~\url{https://help.twitter.com/en/using-twitter/twitter-lists}.}

We began by collecting data on users comprising \RP's ``Economists on Twitter'' list.
We then collected data on users followed by at least $3$ users on the \RP{} list, for a total of~\numUsersNotclean{} users.
The data features for each user include the unique identifiers of her followers and followees, the number of tweets she has produced, the number of tweets she has liked, her tenure on the platform, and the number of lists that she has been added to by other users.
All data was collected using Twitter's public API.\footnote{The \RP{} list can be found at~\url{https://ideas.repec.org/i/etwitter.html}.
The official documentation describing the features of  data obtained through Twitter's API can be found at~\url{https://developer.twitter.com/en/docs/twitter-api/v1/data-dictionary/object-model/user}.
}

We refined the data sample as follows.
First, we removed ``inactive'' users, defined as users with fewer than \followerLower{}  followers, fewer than \followingLower{} followees, or fewer than \tweetLower{} tweets. 
Second, we removed ``superstar'' users, defined as users with more than \followerUpper{}  followers.
This restriction removes few economist outliers, but many non-economists who are particularly popular with \ET{} users.\footnote{For example, Barack Obama, Shakira, and Britney Spears are the three most followed non-economists who are followed by 3 or more users in the \RP{} list. 
Among economists in the \RP{} list, the ``superstar'' restriction resulted in Paul Krugrman, Xavier Sala-i-Martin, Joseph Stiglitz, and Alejandro Gaviria being dropped from the sample. 
Erik Brynjolfsson (barely) made it.}
Third, to eliminate some ``spam'' accounts from our data, we removed users with more than \followingUpper{} followees.
This restriction only drops a small number of users, and is consistent with the Twitter-imposed basic followee limit.\footnote{For more details, see~\url{https://help.twitter.com/en/using-twitter/twitter-follow-limit}.}
The final sample consists of \numUsers{} \ET{} users, who collectively follow and are followed by \ALLnumUsers{} users.

\subsection{Followers, followees, and ratios}
Figure~\ref{fig:econtwitter_stats_distr} reports the distributions of followers, followees, and follower-to-followee ratios of \ET{} users.
We plot a Gaussian kernel density estimate of the distribution of each outcome, with the bandwidth selected using  Silverman's rule of thumb~\citep{silverman1986density}.
The red vertical lines depict the median (solid) and the mean (dashed) of each distribution.
\begin{figure}[h!]
\begin{center}
\begin{minipage}{1 \linewidth}	
\caption{Distributions of followers, followees, and ratios of \ET{} users}
\label{fig:econtwitter_stats_distr}
\includegraphics[width = 1 \textwidth]{./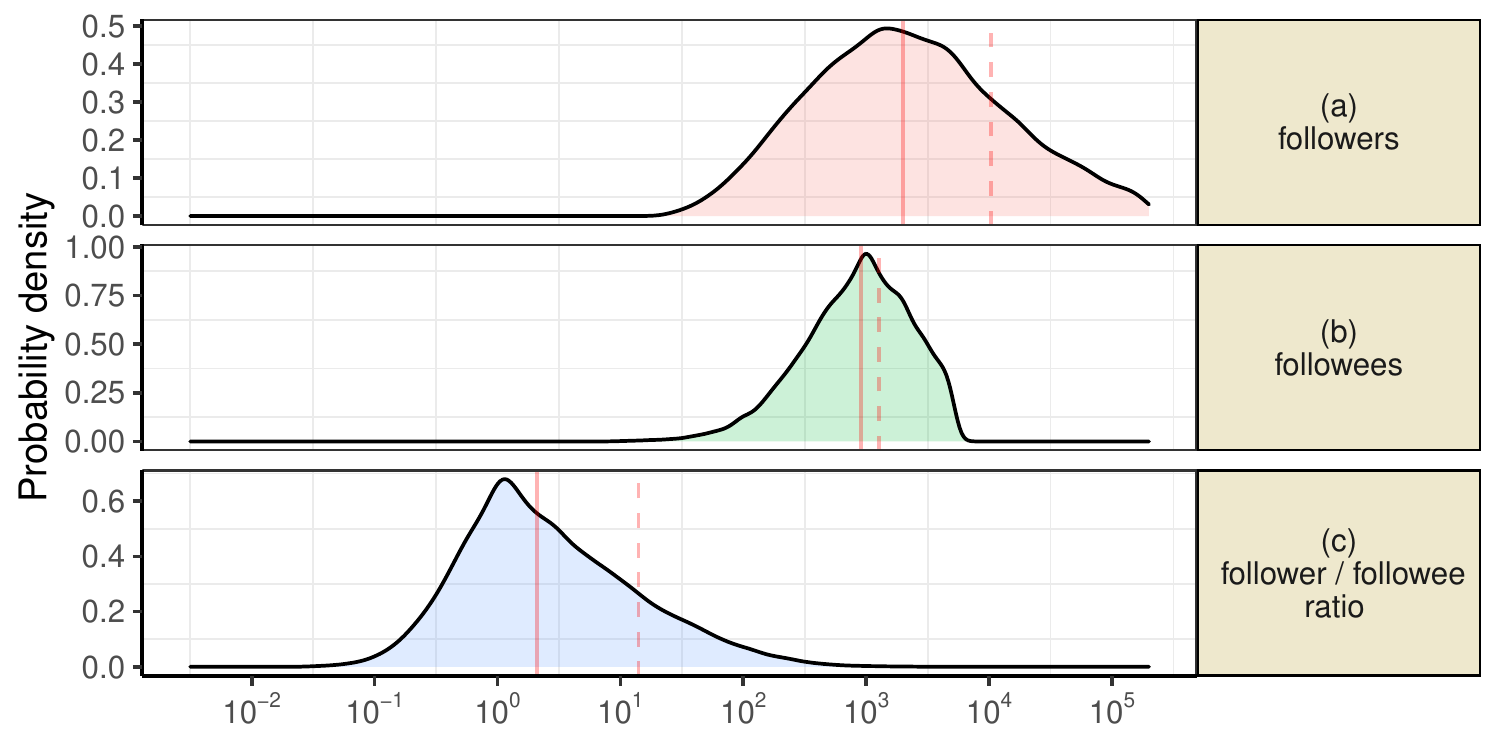}
\end{minipage}
\end{center}
{\footnotesize \begin{minipage}{1 \linewidth}
\vspace*{-0.05in}
\emph{Notes:}
This figure reports kernel density estimates of the probability density function of the distributions of followers, followees, and follower-to-followee ratios for \ET{} users.
For each facet, the vertical red lines depict the median (solid) and the mean (dashed) of the corresponding distribution, and the kernel bandwidth is selected using  Silverman's rule of thumb~\citep{silverman1986density}.
See Appendix~\ref{app:empirics} for density estimates of the same statistics for all users.
\end{minipage}}
\end{figure}

On average, an \ET{} user has \meanFollowers{} followers, \meanFollowing{} followees, and a ratio of \meanRatio{}, whereas the median user has \medianFollowers{} followers, \medianFollowing{} followees, and a ratio of \medianRatio{}.\footnote{In Appendix~\ref{app:empirics}, we report estimates computed on the entire sample. 
Compared to the average Twitter user, \ET{} users have on average more followers and followees, and higher ratios: the median user in the entire data set has \ALLmedianFollowers{} followers, \ALLmedianFollowing{} followees, and a ratio equal to \ALLmedianRatio.
}

Follower counts are highly right-skewed, with some ``stars'' that attract vast numbers of followers.
This is consistent with producers facing no distribution costs and being vertically differentiated, with the ``best'' users having large audiences.
In contrast, the distribution of users' followees is substantially less right-skewed, and does not have a long tail.
This is consistent with consumers having scarce attention.

Despite this follower/followee distinction---and some star users having enormous ratios---follower-to-followee ratios are clustered around one.
This is consistent with Proposition~\ref{prop:attention_bartering_properties_homo}~\ref{pprop:eqbm_followers_ratio}.
This clustering of ratios near one has been documented in previous work spanning several social media, and a relatively long period of time~\citep{kwak2010twitter, myers2014information, sadri2018analysis}.
Interestingly, our ``clubs''---where all users follow each other---have ratios slightly less than one, with the denominator inflated by the ``stars'' those users follow.
If there are few stars, our model predicts large numbers of users with ratios of approximately one.

\subsection{Constructing a proxy for user ability using lists} \label{ssec:user_ability}
The key distinguishing feature of our model is the predictions it makes about ability and outcomes.
However, the ``abilities'' of Twitter users cannot be observed directly. 
As proxy for user ability, we use the number of lists to which a user has been added by other users.
Lists have several properties that are useful for constructing a proxy for user ability.
First, adding a user to a list indicates interest in her tweets, without increasing the enlister's followee count.
Second, the number of lists to which a user has been added is not observable by her would-be followers, and hence list-adding cannot be bartered similarly to attention.

One limitation of the list count is that users with greater tenure on the platform are likely to have been added to more lists.  
To remove some of the tenure dependency effect, we residualize the log number of lists to which a user has been added, by partialing out the log number of tweets that the same user has generated.
We hereafter refer to  user ability as the percentile a user belongs to in our data with respect to the (residualized) number of lists to which she been added.

\subsection{User ability and network statistics}
Our model makes sharp predictions about users' network statistics in the attention bartering equilibrium.
In Figure~\ref{fig:econtwitter_vs_theory_statistics}, we compare these predictions to estimates obtained from the \ET{} data.
Each row depicts a separate network statistic.
The left-column facets depict the empirical estimates as a function of the list-defined user ability.
Each point represents the mean value for users belonging to the same ability percentile, and a 95\% confidence interval is plotted around each point estimate.
Outcomes are normalized to the unit range---see Appendix~\ref{app:empirics} for an unscaled version of the empirical estimates.
The right-column facets depict the equilibrium predictions as a function of user ability, using the model parameterization of Figure~\ref{fig:individual_outcomes_unif}.
The model parameters were chosen for ease of exposition---what matters for our empirical exercise is the shape of the equilibrium predictions.

Facet (a) plots the number of users' followers.
Followers increase in users' abilities, suggesting that our choice of proxy is reasonable.
Consistent with our modeling assumption and with  Proposition~\ref{prop:attention_bartering_properties_homo}~\ref{pprop:eqbm_followers_ratio}, high-ability users attract a far larger number of followers than low-ability users.

\begin{figure}[h!]
\begin{center}
\begin{minipage}{0.972 \linewidth}	
\caption{\ET{} estimates and model predictions of network statistics. \label{fig:econtwitter_vs_theory_statistics}}
\includegraphics[width = 1 \textwidth]{./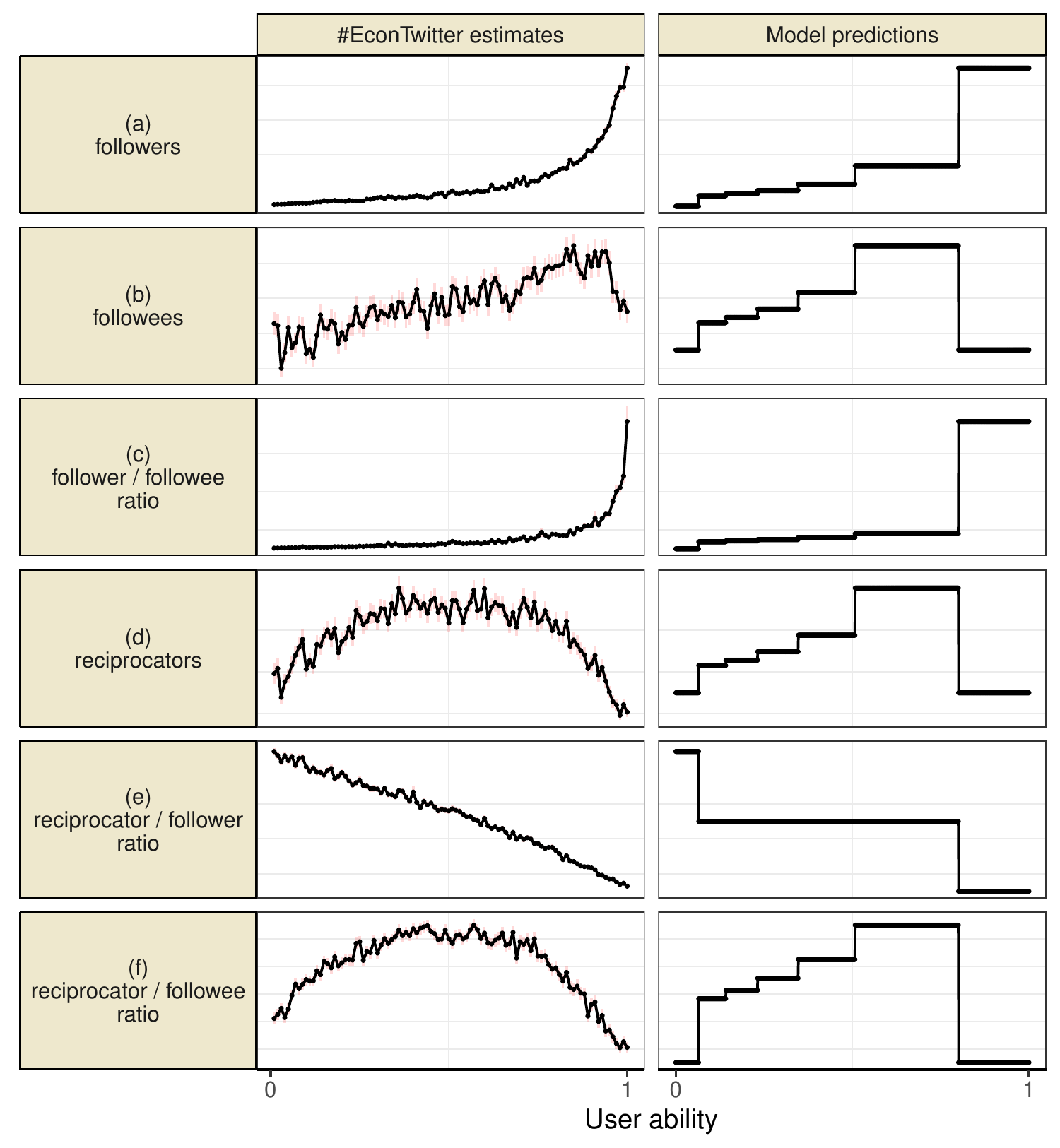}	
\end{minipage}
\end{center}
{\footnotesize \begin{minipage}{1 \linewidth}
\vspace*{-0.1in}
\emph{Notes:}
This figure reports empirical estimates and theoretical predictions for  networks statistics (y-axis) as a function of user ability (x-axis).
On the left-column, we plot estimates of \ET{} users' (a) followers, (b) followees, and (c) follower-to-followee ratios, (d) reciprocators, (e) reciprocator-to-follower ratios, and (f) reciprocator-to-followee ratios. 
Each point represents the mean value (y-axis) for users belonging to the same ability percentile (x-axis), and a 95\% confidence interval is shown around each point estimate.
On the right column, we plot model predictions for the same quantities.
The illustrated attention bartering equilibrium assumes uniformly distributed valuations on the unit interval, $q_0 = 0.8$, and $I(x)=\sqrt{x}/2$.
All quantities are normalized to the same scale.
See Section~\ref{ssec:user_ability} for details on the definition of abilities of \ET{} users, and Appendix~\ref{app:empirics} for the scales of the empirical estimates.
See Figure~\ref{fig:info_uniform} for an alternative representation of the same attention bartering equilibrium.
\end{minipage}}
\end{figure}

Facet (b) plots the number of followees---the number of users followed by each user.
Followees increase initially in user ability, but at a much slower rate than the number of followers, and then decrease at high ability levels.
This ``followee dip'' is consistent with Proposition~\ref{prop:attention_bartering_properties_homo}~\ref{pprop:eqbm_following}, and supports our modeling approach that following users is costly in terms of consumption, and not just a matter of awareness or opportunity.

Follower-to-followee ratios are reported in facet (c). 
These ratios are increasing in user ability, but at a slower rate than the number of followers: users within broad ability ranges have similar ratios, consistent with  Proposition~\ref{prop:attention_bartering_properties_homo}~\ref{pprop:eqbm_followers_ratio} and~\ref{pprop:eqbm_clubs}.\footnote{At high ability levels, ratios seemingly become more dispersed within the same ability level.
One possible explanation is that high-ability users, as captured by our proxy, comprise both star users with many followers and few followees, but also users with large numbers of both followers and followees.
The latter are likely users who have attained visibility due to attention bartering, and have been subsequently added to many lists.
This does illustrate a limitation of our ability proxy.}

\subsection{Reciprocation and attention bartering}
We cannot observe whether a relationship is truly reciprocal in the sense of our model---we do not know what would happen if one side of the deal unfollowed.
As such, we cannot verify whether a bilateral relationship was the result of the two users following each other organically, or due to attention bartering.
Nevertheless, we refer to any bidirectional following relationship in our data as reciprocal in what follows.

Facet (d) in Figure~\ref{fig:econtwitter_vs_theory_statistics} plots the number of users' reciprocal followers as a function of user ability.
The number of reciprocal followers initially increases in user ability, but then decreases, and eventually attains its minimum at high user abilities.
In facet (e), we see clearly that the reciprocator-to-follower ratio is strictly decreasing in user ability.
Low-ability users reciprocally follow the majority of their (few) followers, and users become more selective as their abilities increase.
At the extreme, users of very high abilities follow back fewer than 0.5\% of their followers.

Facet (f) reports the fraction of a user's followees that follow her back.
About 20\% of the followees of high-ability users follow them back, whereas the same number is about 40\% for medium-ability users.
The predictions of our model are borne-out in our \ET{} data: middle-ability users are the ones who barter the most, with their attention bartering intensity increasing in their abilities, while low- and high-ability users barter the least---consistent with Proposition~\ref{prop:attention_bartering_properties_homo}~\ref{pprop:eqbm_reciprocation}.

\subsection{Assortative matching}
We find strong evidence of assortative matching. 
Figure~\ref{fig:econtwitter_vs_theory_reciprocation} plots the ability ranges of users' organic and reciprocal followees as a function of user ability.
Panel~\ref{fig:econtwitter_stats_reciprocation_range} reports estimates from the \ET{} data. 
The left facet plots the distributions for users' organic followees---followees who do not follow back---and the right facet plots the distribution for users' reciprocal followees.
In each facet, we report the median (black dots) and the interdecile range (pink-shaded area) of the ability distribution of followees, for each ability level.
Panel~\ref{fig:model_stats_reciprocation_range} reports model predictions for the same quantities, reporting the median ability of users followees in the attention bartering equilibrium (black dots), and the ability range that users are willing to follow organically and reciprocally (pink-shaded area).

\begin{figure}[h!]
\begin{center}
\begin{minipage}{1 \linewidth}	
\caption{Ability ranges of users' followees\label{fig:econtwitter_vs_theory_reciprocation}}
\begin{subfigure}[t]{1 \textwidth}
\caption{\ET{} estimates}
	\includegraphics[width = 1 \textwidth]{./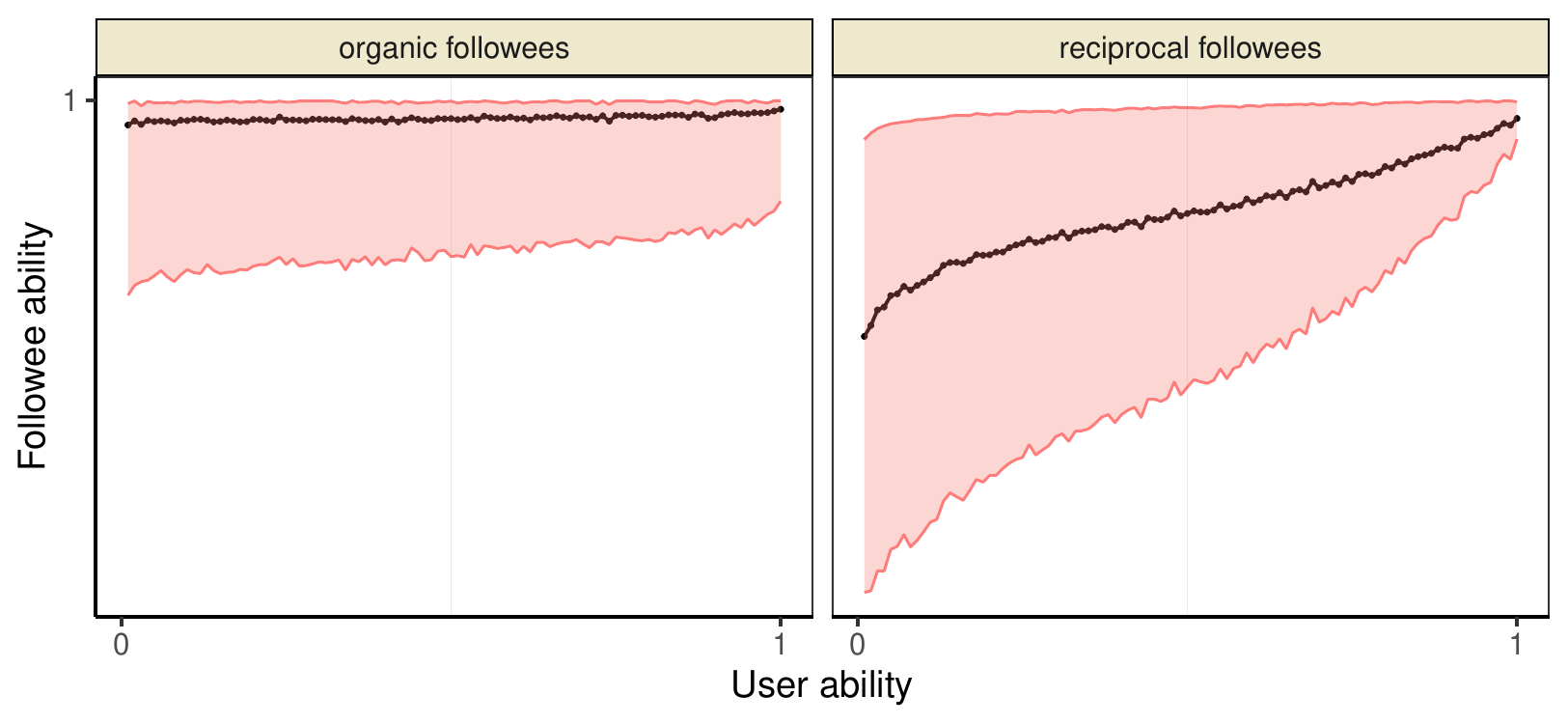}	
	\label{fig:econtwitter_stats_reciprocation_range}
\end{subfigure}
\begin{subfigure}[t]{1 \textwidth}
\centering
\caption{Model predictions}
	\includegraphics[width = 1 \textwidth]{./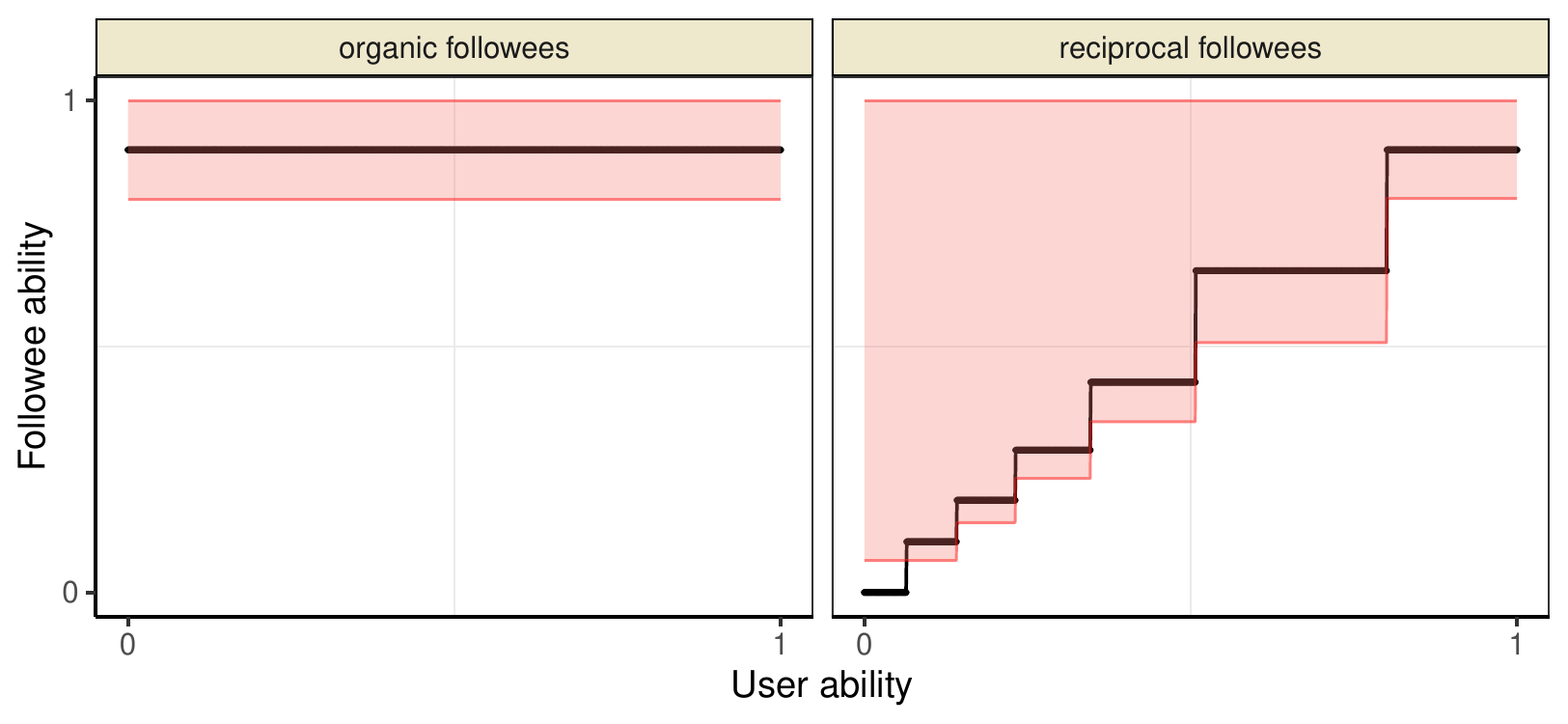}
	\label{fig:model_stats_reciprocation_range}
\end{subfigure}
\end{minipage}
\end{center}
{\footnotesize \begin{minipage}{1 \linewidth}
\vspace*{-0.2in}
\emph{Notes:}
This figure reports empirical estimates and theoretical predictions of the following behavior of users.
Panel~\ref{fig:econtwitter_stats_reciprocation_range} plots the median and interdecile range of the distribution of followee abilities (y-axis) for \ET{} users belonging to the same ability percentile (x-axis).
The left facet reports the distribution for organic followees---followees who do not follow the user, and the right facet for reciprocal followees---followees who are also followers.
Panel~\ref{fig:model_stats_reciprocation_range} plots model predictions for the same quantities.
The illustrated attention bartering equilibrium assumes uniformly distributed valuations on the unit interval, $q_0 = 0.8$, and $I(x)=\sqrt{x}/2$.
See Section~\ref{ssec:user_ability} for more details on the definition of abilities of \ET{} users.
See Figure~\ref{fig:info_uniform} for an alternative representation of the same attention bartering equilibrium.
\end{minipage}}
\end{figure}

The data estimates and the predictions of the model are remarkably similar. 
Low-ability users are willing to reciprocate with any willing user who is not of extremely low quality.
High-ability users are only form reciprocal relationships with other high-ability users---note that these relationships are bidirectionally organic.
The picture is completely different for organic followees: users follow unconditionally only high ability users, independent of their own abilities, both in our data and in our model.\footnote{Network formation models based on preferential attachment could explain some but not all of these patterns~\citep{barabasi1999emergence}.
For example, highly right-skewed follower distributions are consistent with rich-get-richer link formations, but would require more able users to have systematically joined the platform earlier.
Had this been the case---and assuming that users do not ``re-wire'' frequently---it would also explain why higher ability users tend to reciprocate with high ability users, as those are the users that were already present when they joined.
We find very little relationship between user tenure and our ability measure, and our ability measure is has substantially more predictive power than tenure with respect to users' follower-to-followee ratios.
For more details, see Appendix~\ref{app:empirics}.}

\OCD
\section{Discussion} \label{sec:discussion}

\subsection{Production, consumption, and quality of content} \label{ssec:quality}
Our model has several touch points with the economics of club goods~\citep{buchanan1965economic}.
Each club member pays for membership with their attention, in exchange for consuming a finite resource---the attention of other members.
In every club, members are not quite good enough for the club ``above,'' but are too good for the club ``below.''
The marginal club member of the lowest-ability club has ability equal to the marginal utility of receiving the attention of a single follower: this is exactly the first user whose ability is so low that she would not barter with herself.\footnote{On a somewhat more general context than social media, Groucho Marx famously made a similar observation: ``I don't want to belong to any club that will accept me as one of its members.''}
Particular to the social media context is the fact there is no mechanism excluding non-members from consuming the content produced within clubs---the scarce resource is not the content, but the attention  bartered between the club members.\footnote{Social media clubs have no officers, no explicitly stipulated rules, and (potentially) fractional memberships.}

In our model, attention bartering induces users of lower abilities to form clubs and to produce content---these users would have otherwise lurked. 
Consequently, the quantity of content produced on the platform increases, but its average quality decreases.
This low-quality content is consumed only by users belonging to the club wherein it was produced.
However, it can also spread to the platform through channels that include 
platform features such as algorithmic recommendation systems, 
new users who ``join'' these clubs unwittingly, 
and the inter-club connections created by preference heterogeneity (see Section~\ref{ssec:heterogeneity}).

If low-quality content is likely to be less truthful, attention bartering clubs become natural ``reservoirs'' of disinformation---these  can be thought of as the social media analogue to the epidemiological concept of reservoirs of pathogens~\citep{haydon2002identifying}.
New users may then pick up and spread this disinformation unintentionally.
Furthermore, to the extent that beliefs and behaviors are shaped by one's community, attention bartering may also exacerbate polarization. 

Attention bartering can have positive effects that are not strictly captured by our model.
Although in our base model attention bartering decreases the average quality of content produced, it may increase the average quality of content consumed in practice. 
For example, assume instead that every user has the potential to produce extremely valuable ``star'' content with a very low probability, which does not depend on her ability.
Attention bartering increases the number of users who actively produce, and hence the amount of star content produced on the platform.
Similarly, by introducing additional edges to the network, attention bartering may also increase the probability that star content will become ``viral.''\footnote{Presumably, attention bartering is to thank for---according to some---the best viral video of 2019, which can be accessed at~\url{https://twitter.com/indiemoms/status/1189587264654974980}.}

\subsection{Network statistics and user ability} \label{ssec:uncertainty}
Social media platforms typically display network information, such as the number of followers and followees, prominently within each user's home page (see Figure~\ref{fig:screenshots}).
This information presumably signals ability to prospective followers, and is conceptually similar to ``readership'' and ``viewership'' statistics reported by traditional media.
It is worth examining how attention bartering affects the informativeness of these signals. 

The two most common proxies for user ability on social media are a user's number of followers and ``ratio.''
The ratio of a user is defined as the number of users that follow her over the number of users she follows. 
In our base model, both statistics are weakly increasing in user ability;
only high-ability users have ratios greater than one, and users belonging to the same club have the same ratio and number of followers.
With heterogeneous preferences, the two statistics become strictly increasing in users' abilities, and hence likely more informative (see Appendix~\ref{app:horizontal_model}).
Attention bartering increases the ratio of users with more organic followees than followers, and vice versa for users with more organic followers than followees---as the number of reciprocal followers increases, ratios converge to one in both cases.
This finding helps to explain why social media users attempt to emulate the equilibrium outcome, keeping their ratios close to one.

The informativeness of these signals can break down if users have heterogeneous attention preferences, or heterogeneous monitoring costs.
For example, if the disutility of attention bartering is low enough, a low-ability user who values attention greatly can form a large number of reciprocal following relationships, potentially making her number of followers and ratio greater than those of a higher-ability user who values attention less.
Furthermore, real-life users may fail to monitor successfully that their reciprocal users have unfollowed them.
Consequently, a user who ``follows-then-unfollows'' unrelentingly may achieve statistics that resemble those of a high-ability user.\footnote{This is likely the reason why social media platform TikTok  displays prominently the amount of  ``likes'' each user's content has garnered within her home page.}

\subsection{Platform strategy and market design}
Social media platforms may be better off allowing---or even encouraging---attention bartering.
To see why, recall that a user who engages in attention bartering experiences higher utility from using the platform, and hence she is more likely to remain active on the platform.
By remaining active, she then increases the attention utilities of users she follows either organically or reciprocally, as well as the consumption utilities of her organic followers.
This reinforces the underlying network effect.
Today, the business model of social media platforms typically hinges upon monetizing user activity through advertising, and hence more or more active users likely result in higher profits---at least in the short run.

As social media platforms mature, they are likely to address some of the downsides of attention bartering.
Social media platforms regularly employ mechanisms aimed at increasing the consumption utility of users.
A side effect of these mechanisms is that they alleviate the consumption utility penalty of attention bartering, thereby making it more compelling. 
Examples include
``lists'' that allow users to consume content generated by a subset of the users---even ones they do not follow,  
the option to ``mute'' content generated by a user without unfollowing her, and 
algorithmically curated rankings that display content in non-chronological order. 
However, because these features affect directly the attention utility users receive, they may also affect adversely the production incentives of non-bartering users.

Attention bartering is sustained in equilibrium because users choose to incur monitoring costs to ensure that their reciprocal followers hold their end of their bargain.
Increasing these monitoring costs can decrease  the amount of attention bartering taking place.
Platforms could make it harder to barter attention by ceasing to notify users who attempt to initiate such relationships that their prospective reciprocal followers have followed them back.
Post the relationship formation period, platforms could also increase users' costs of finding out whether other users follow them.
For example, Twitter could remove the ``Follows you'' tag (see Figure~\ref{fig:screenshots}), and  make it costlier---or impossible---to sift through users' lists of followers and followees.
To dissuade ``follow-then-unfollow'' strategies, platforms could impose a time limit before being able to unfollow a user, and display prominently the number of unfollowed users within each user's home page.

\subsection{Social and individual welfare}

The potentially negative effect of social media on individual welfare has received substantial attention.
Millions of people clearly enjoy using social media---or at least prefer them to their next best alternatives~\citep{pew2019social}.
Social media can amplify individuals' reach, 
diversify news consumption,
facilitate participatory discourse, 
and influence public opinion positively,
but they may also increase political polarization, 
and serve as a vector for misinformation~\citep{george2003affects, tufekci2017twitter, Vosoughi1146, alatas2019,  allcott2020welfare, levy2020social}.

In an interesting experiment that speaks to individual welfare, \cite{allcott2020welfare} find that users who were paid not to use Facebook reported being happier, and used social media less even after the experiment ended.
They point out that the pattern they observe is consistent with social media usage being an addictive good in the \cite{becker1988theory} sense.
If it was the \emph{consumption} of social media \emph{per se} that made people unhappy, it would be hard to understand why anyone used social media in the first place.
But our ``two kinds of utilities'' perspective fits the fact nicely---social media consumption makes people unhappy (at at least not as happy if they consumed only what they truly wanted), but it is necessary to gain the pleasurable attention that consumption obtains.
It is interesting to note that Becker himself viewed people as the most addictive ``good.''\footnote{See~\url{https://freakonomics.com/2008/11/11/gary-becker-thinks-the-most-addictive-thing-is/}.}

\OCD
\section{Conclusion} \label{sec:conclusion}
This paper offers an economic model that can explain otherwise puzzling patterns of consumption and production on social media.
In our model, attention bartering profoundly affects the production and consumption of content on social media platforms: users of similar abilities form clubs and actively produce content, and consume the content of members of the same club---but they would have otherwise not produced any content.
Social media looks quite different from regular media, and the difference is the possibility of attention bargaining.

We show empirically that many of the features of a prominent Twitter community are explicable through an economic, attention bartering lens.
The empirical findings that we report are consistent with attention bartering taking place on \ET{}, and our model predictions are largely borne out.
In both our model and in our data, medium-ability users are the keenest attention barterers, forming large fractions of their relationships with their attention utilities in mind.
On the other hand, low- and high-ability users follow others seemingly to only increase their consumption utilities, albeit for very different reasons: low-ability users would like to barter but lack the opportunities to do so, and bartering opportunities abound for high-ability users---who have no incentive to barter.

Attention bartering emerges because of the coexistence of production and consumption incentives for users.
As such, it can occur even in the absence of algorithms steering users to form connections.
In contrast to approaches that point to algorithms as the culprit, we offer an incentives-based model of network formation, with algorithms playing no real role.
This is simplification of course, but much of the discussion about social media has focused on the role played by algorithms, with relatively little attention paid to incentives.
The degree to which the main patterns in the \ET{} data are explicable only with an incentives perspective suggests that a focus on algorithmic explanations is at the very least incomplete.

In addition to offering a description of social media, our paper illustrates a core market design challenge for would-be social media platform designers.
Platforms may try to harness attention bartering for good but reduce its negative effects.
We lay out several market design interventions that can potentially address the downsides of attention bartering;
future research could evaluate these mechanisms empirically, or propose other ways to dissuade social media users from playing the ensuing ``game of follows.''

\newpage \clearpage
\bibliographystyle{aer}
\bibliography{twitter.bib}

\appendix

\setcounter{section}{0}
\renewcommand\thesection{\Alph{section}}
\setcounter{equation}{0}
\renewcommand{\theequation}{A\arabic{equation}}

\clearpage \newpage
\section{Heterogeneous consumption preferences}
\label{app:horizontal_model}
We extend the model of Section~\ref{sec:model} to allow for horizontal consumption preferences.
Our modeling approach is to assume that a user's opportunity cost for following another user is independently and identically drawn from a given distribution.
As such, there is now both a vertical component (user ability) and a horizontal component (idiosyncratic opportunity costs) to consumption preferences.
Users no longer share one universal ordering of abilities; rather, each user has an individual preference ordering for the other platform users.
It is worth noting that high-ability users are more likely to rank higher in a user's ordering, as a vertical component still exists.
For ease of exposition, we fix the opportunity cost distribution to be the uniform distribution on the unit interval in what follows.

Because there is no universal ordering of user abilities, each user now decides on the maximum consumption utility decrease that she is willing to incur in order to engage in a reciprocal following relationship. 
We adjust the strategy notation, and denote by ${f}_\alpha \in [0,1]$ the maximum consumption utility decrease that user $\alpha$ is willing to incur to reciprocally follow another user. 
If $f_\alpha = 0$, user $\alpha$ follows only organically, whereas if $f_\alpha > 0$, user $\alpha$ is willing to engage in reciprocal relationships.

\subsection{Attention bartering is impossible}
When attention bartering is not possible, reciprocal following relationships cannot be sustained.
There exists a unique equilibrium $f^*$, where $f^*_\alpha = 0$ for all users.
In this equilibrium, only organic followers exist.
The probability that a user $\alpha$ will organically follow a user $x$ is $\Pr ( q_0 \leq x) = x.$
User $\alpha$ follows
\begin{equation}
n_\alpha(f^*) = \int_0^1 x\;\;dx = \frac{1}{2}
\end{equation} 
users, and obtains consumption utility
\begin{equation}
U_\alpha(f^*) = \int_{0}^1 E [ x - q_0 | q_0 \leq x ] \Pr(q_0 \leq x) dx
		= \int_0^1 \frac{x^2}{2} da = \frac{1}{6},
\end{equation}
where $E [ x - q_0 | q_0 \leq x ]$ is the expected consumption utility obtained conditional on a user with ability $x$ being followed organically.
Similarly, user $\alpha$ attracts
\begin{equation}
	m_\alpha(f^*) = \int_0^1 \alpha \;\; dx = \alpha 
\end{equation}
followers, from whom she obtains attention utility $I_\alpha(f^*) = I(\alpha)$.

\subsection{Attention bartering is possible}
When attention bartering is possible, user $\alpha$ may increase her threshold $f_\alpha$, thereby trading off a decrease in her consumption utility for an increase in her attention utility.

By symmetry, users of the same ability will choose the same  strategy in any equilibrium---recall that abilities are uniformly distributed on the unit interval.
Consider a user $\alpha$, and a  profile $f = (f_\alpha, f_{-\alpha})$.
Under this profile, user $\alpha$ has a number of reciprocal followers equal to
\begin{align}
\label{eq:het_rec_followers}
r_\alpha(f) = 	
& \int_0^1 (1-\alpha)(1-x)\Pr(x + f_a > q_0 | x < q_0) \Pr(\alpha + f_{x} > q_0| \alpha <q_0) dx \nonumber \\ = 
& \int_0^1
	\underbrace{(1-\alpha)(1-x) \min \left \{ \frac{f_\alpha}{1 - x}, 1 \right \} \min \left \{ \frac{f_{x}}{1 - \alpha}, 1 \right \} }_{\mbox{{\footnotesize $r_{a,x}(f)$}}} dx.
\end{align}
It is worth examining the terms of Equation~\eqref{eq:het_rec_followers}.
Consider some ability level $x$, and observe that user $\alpha$ is followed organically by an $\alpha$ fraction of users of any ability level, including users with ability $x$.
Amongst the remaining $1-\alpha$ users that do not follow her, user $\alpha$ follows a fraction $x$ organically herself, leaving a fraction $(1-\alpha)(1-x)$ to be candidates for reciprocal following.
Amongst these potential relationships, a fraction $\min \left \{ \frac{f_\alpha}{1 - x}, 1 \right \} \min \left \{ \frac{f_{x}}{1 - \alpha}, 1 \right \}$ materializes for follow strategies $f_a, f_x$, giving a total of $r_{a,x}(f)$.

Under the profile $f$, user $\alpha$ follows
$n_\alpha(f) = \frac{1}{2} + r_\alpha(f)$
users, and is followed by 
$
	m_\alpha(f) = \alpha + r_\alpha(f)
$
users, and obtains total utility
\begin{align}
	V_\alpha(f) & = \frac{1}{6} + \underbrace{\int_0^1 r_{\alpha,x} (f) E [x-q_0| q_0 - f_\alpha \leq x \leq q_0 ] dx}_{\mbox{{\footnotesize consumption utility decrease}}} - c \int_0^1r_{\alpha,x}(f)dx  + I \left ( \alpha + r_\alpha(f) \right ) \\ 
	& = \frac{1}{6} - \int_0^1 r_{\alpha,x} (f) \frac{f_\alpha}{2} dx - c \int_0^1r_{\alpha,x}(f)dx + I \left ( \alpha + r_\alpha(f) \right ) \\
	&=  \frac{1}{6} - \left ( \frac{f_\alpha}{2} + c \right ) r_\alpha(f) + I \left ( \alpha + r_\alpha(f) \right ) 
\end{align}
Because $V_\alpha(f_\alpha, f_{-\alpha})$ is concave in $f_\alpha$ for any $f_{-\alpha}$, a unique equilibrium exists~\citep{rosen1965existence}.
Because attention utility is a concave increasing function, lower-ability users have stronger incentives to increase their attention utility, and will tolerate larger consumption utility decreases in equilibrium.
As such, $\alpha > \alpha'$ implies $f^*_\alpha < f^*_{\alpha'}$.

\subsection{Illustration of market equilibrium }

Figure~\ref{fig:individual_outcomes_unif_het} illustrates the market equilibrium with and without attention bartering, in a platform where users are characterized by heterogeneous consumption preferences (as described in the beginning of this section).
Our goal is to highlight the differences between the effects of attention bartering in the homogeneous and the heterogeneous consumption preferences case.
\begin{figure}[h!]
\begin{center} 
\caption{Following, consumption, and attention with heterogeneous user preferences.
\label{fig:individual_outcomes_unif_het}}
\begin{subfigure}[t]{1 \textwidth}
\caption{Hidden follower information.}
\label{fig:no_info_het_uniform}
	\begin{subfigure}[t]{0.5 \textwidth}
  		\includegraphics[width = \linewidth]{./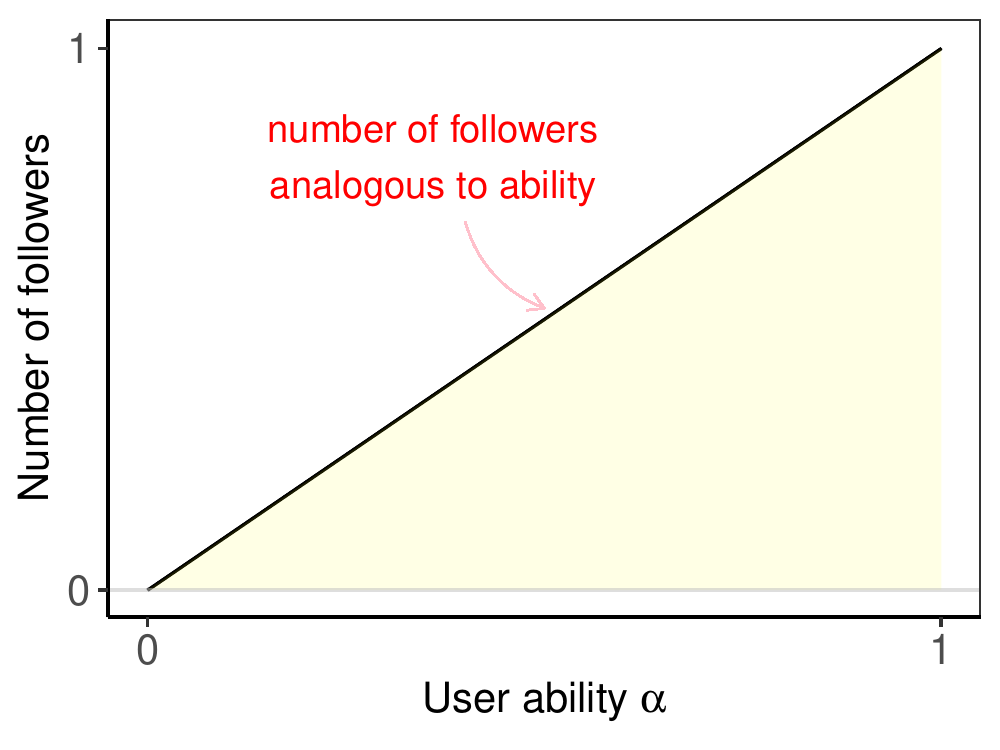}
		\label{fig:het_no_info_uniform_followers}
	\end{subfigure}
	\begin{subfigure}[t]{0.5 \textwidth}
  		\includegraphics[width = \linewidth]{./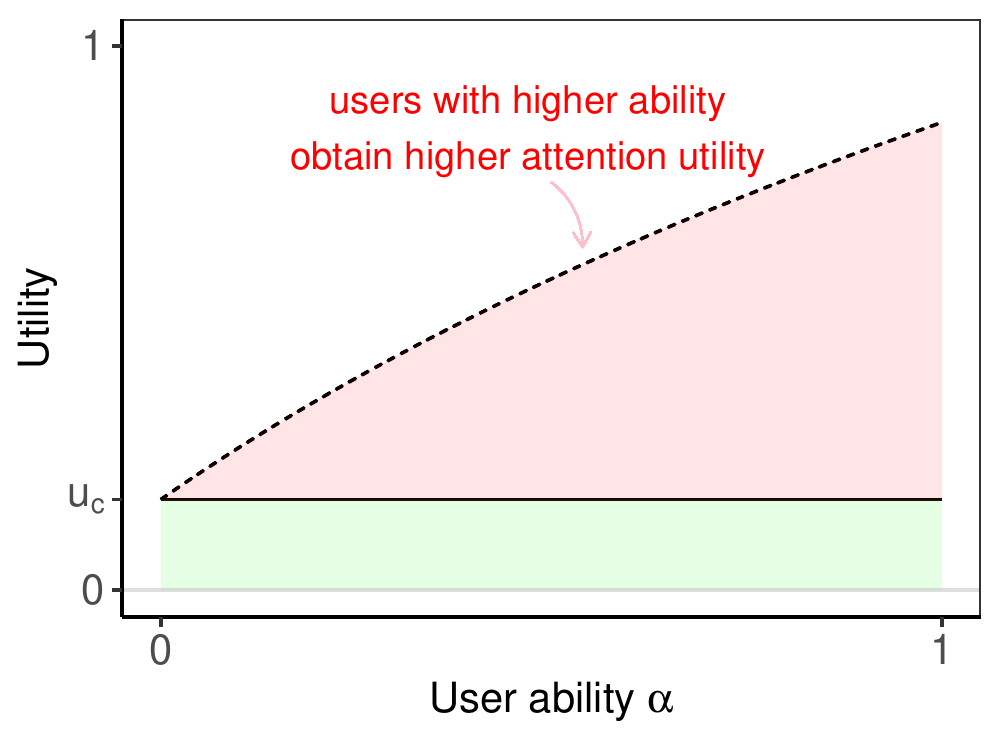}
  		\label{fig:het_no_info_uniform_utilities} 
	\end{subfigure}
\end{subfigure}

\begin{subfigure}[t]{1 \textwidth}
\caption{Public follower information.}
\label{fig:het_info_uniform}
	\begin{subfigure}[t]{0.5 \textwidth}
  		\includegraphics[width = \linewidth]{./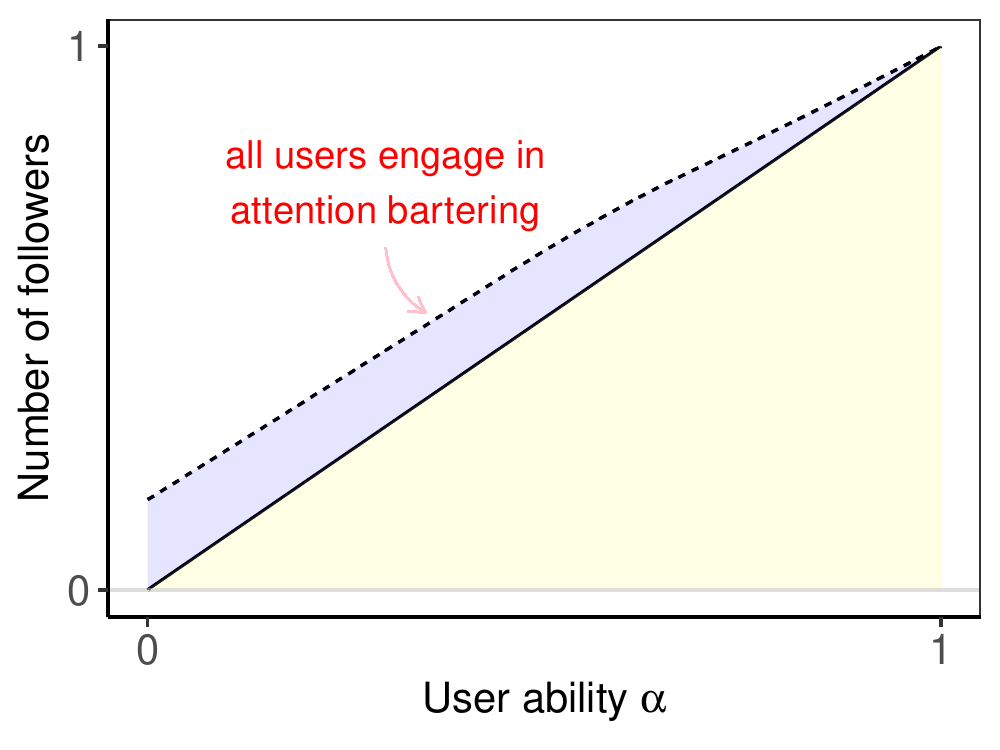}
		\label{fig:het_info_uniform_followers}
	\end{subfigure}
	\begin{subfigure}[t]{0.5 \textwidth}
  		\includegraphics[width = \linewidth]{./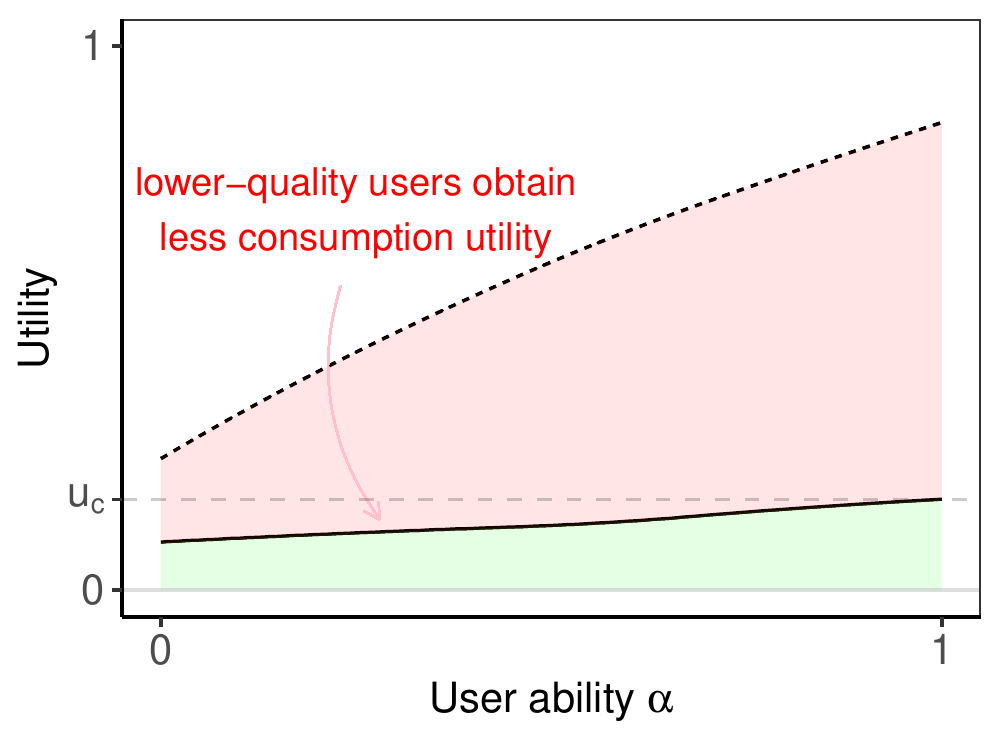}
  		\label{fig:het_info_uniform_utilities} 
	\end{subfigure}
\end{subfigure}
\end{center} 
{\footnotesize \begin{minipage}{1 \linewidth}
\emph{Notes:}
This figure plots users' followers, consumption utilities, and attention utilities, with and without follower information.
It illustrates the case of uniformly distributed valuations on the unit interval and $I(x)=\log(1+x)$ and $c=0.2$.
In all panels, the yellow-shaded area depicts the number of organic followers, 
the blue-shaded area depicts the number of reciprocal followers,
the green shaded-area depicts consumption utility, 
and the red shaded area depicts attention utility.
Panel~\ref{fig:no_info_het_uniform} plots the users' followers (left) and utilities (right) in a market with hidden follower information.
Only organic followers exist, and users are followed by a number of followers commensurate to their ability.
Every user obtains both consumption and attention utility: all users obtain the same consumption utility, but attention utility is commensurate to their abilities.
Panel~\ref{fig:het_info_uniform} plots the users' followers (left) and utilities (right) in a market with public follower information.
Reciprocal following now emerges, and users trade off decreases in consumption utility for increases in attention utility.
Figure~\ref{fig:statistics} reports additional quantities for the attention bartering equilibrium.
\end{minipage} }
\end{figure}

In Figure~\ref{fig:no_info_het_uniform}, we illustrate the equilibrium when attention bartering is not possible, for the case of $I(x) = log(1+x)$ and $c=0.2$.
Introducing heterogeneous consumption preferences ``smooths out'' following relationships, as each user attracts a followers in accordance to her ability.
This is in contrast to the homogeneous preferences case, where the number of followers is a step function (see Figure~\ref{fig:no_info_uniform}).

The equilibrium when attention bartering is possible is depicted in Figure~\ref{fig:het_info_uniform}.
In this simulation, we set the attention utility to $I(x) = \log(1+x)$, and the monitoring cost to $c=0.2$.
In the left panel, we observe that users find it utility-improving to engage in reciprocal following---except users of the highest ability, $\alpha=1$, who are already followed by every user of the platform. 
In the right panel, we plot the corresponding utilities.
All users obtain lower consumption utility, because they follow users of abilities less than their opportunity cost, and they incur monitoring costs to maintain the reciprocal relationships.
Lower-ability users engage in reciprocation most intensely: these users reciprocally follow higher-ability users, and hence obtain less consumption utility, but also the highest attention utility increase.

\begin{figure}[h!]
\caption{Followers, followees, and ratios when attention bartering is possible}
\label{fig:statistics} 
\begin{center}
\begin{subfigure}[t]{0.49 \textwidth}
\caption{Homogeneous consumption preferences}
\includegraphics[width = \linewidth]{./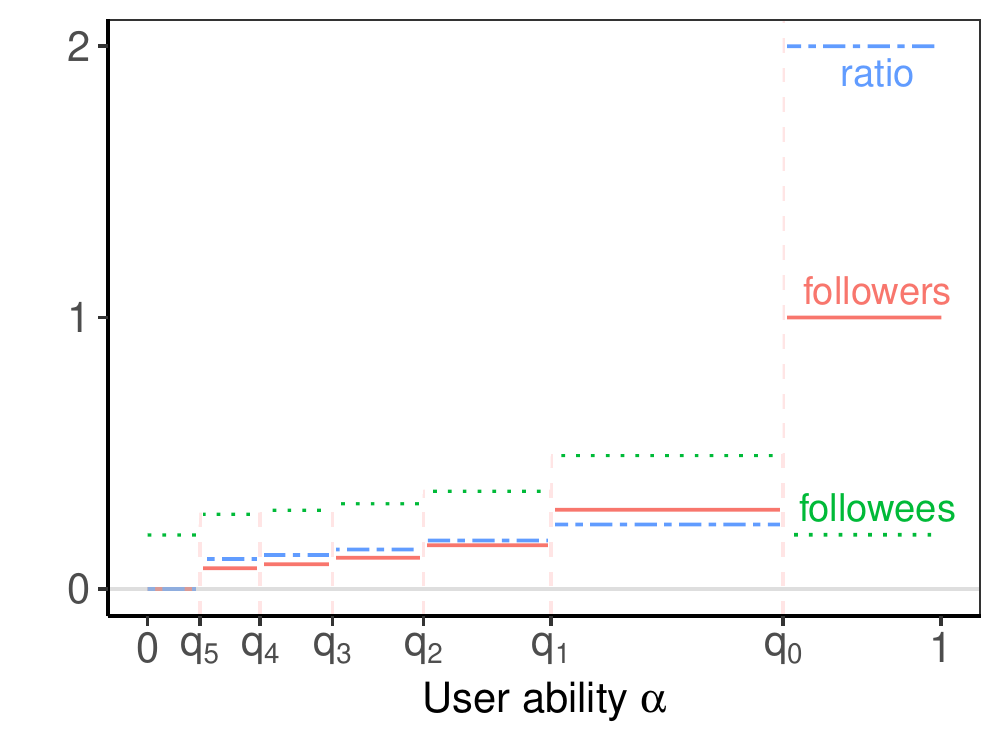}
\label{fig:info_uniform_ratio} 
\end{subfigure}	
\begin{subfigure}[t]{0.49 \textwidth}
\caption{Heterogeneous consumption  preferences}
\includegraphics[width = \linewidth]{./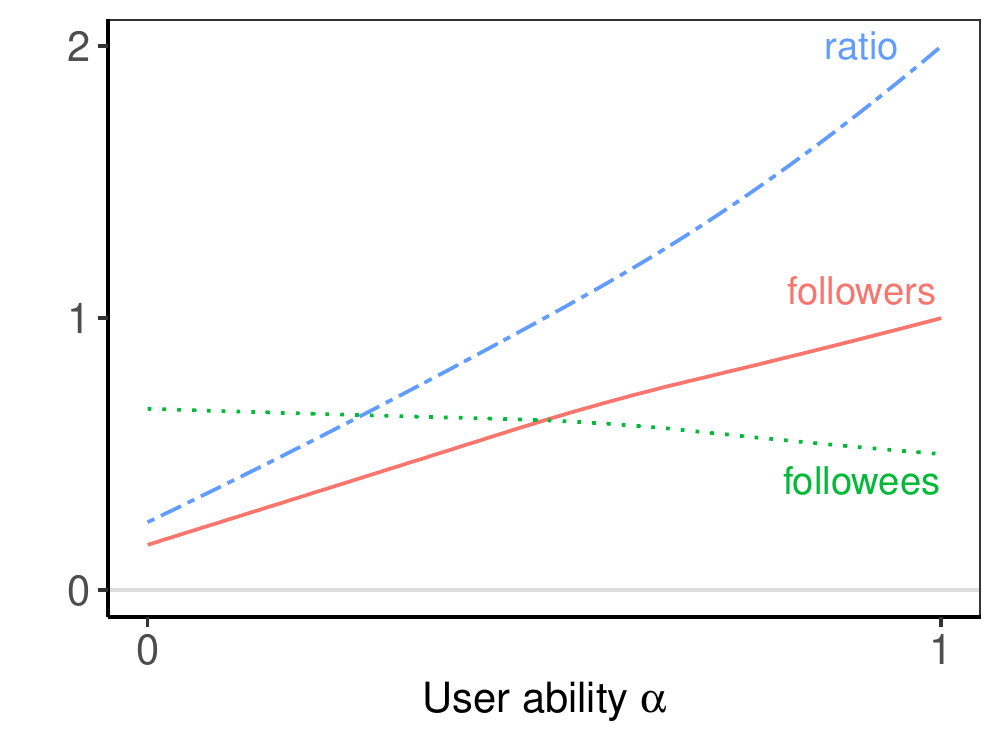}
\label{fig:het_info_uniform_ratio} 
\end{subfigure}
\end{center}
{\footnotesize \begin{minipage}{1 \linewidth}
\vspace*{-0.2in}
\emph{Notes:}
This figure plots network statistics in the equilibrium of a platform where  attention bartering is possible.
Panel~\ref{fig:info_uniform_ratio} shows the number of followees (dotted green line), 
the number of followers (solid red line), 
and the follower-to-followee ratio (two-dash blue line)  for each user ability $\alpha$, when consumption preferences are homogeneous, and with all other parameters set as in in Figure~\ref{fig:individual_outcomes_unif}.
Panel~\ref{fig:het_info_uniform_ratio} shows the same quantities when consumption preferences are heterogeneous, and with all other parameters set as in in Figure~\ref{fig:individual_outcomes_unif_het}.
\end{minipage} }
\end{figure}

\newpage \clearpage
\section{Additional empirical results} 
\label{app:empirics}

\subsection{Follower, followee, and ratio distributions of all users}
Figure~\ref{fig:alltwitter_stats_distr} reports the distributions of followers, followees, and follower-to-followee ratios of users in the entire sample.
For each outcome, we report a Gaussian kernel density estimate of its distribution with the bandwidth selected using  Silverman's rule of thumb~\citep{silverman1986density}, as well as the estimated median (solid red vertical line) and the mean (dashed red vertical line).

Compared to the average \ET{} user, Twitter users in our data have on average fewer followers and followees, and lower ratios: the median user in the entire data set has \ALLmedianFollowers{} followers, \ALLmedianFollowing{} followees, and a ratio equal to \ALLmedianRatio.
Only \ALLmorethanoneRatio{}\% of the users have ratio higher than one in the entire sample, and the distribution of ratios exhibits a noticeable kink at the unit ratio---this is consistent with  Proposition~\ref{prop:attention_bartering_properties_homo}~\ref{pprop:eqbm_followers_ratio}.
It is worth noting that the effects of Twitter’s followee limits are easily discernible in the kinks of the middle facet.

\begin{figure}[h!]
\begin{center}
\begin{minipage}{1 \linewidth}	
\caption{Distributions of followers, followees, and ratios of all users}
\label{fig:alltwitter_stats_distr}
\includegraphics[width = 1 \textwidth]{./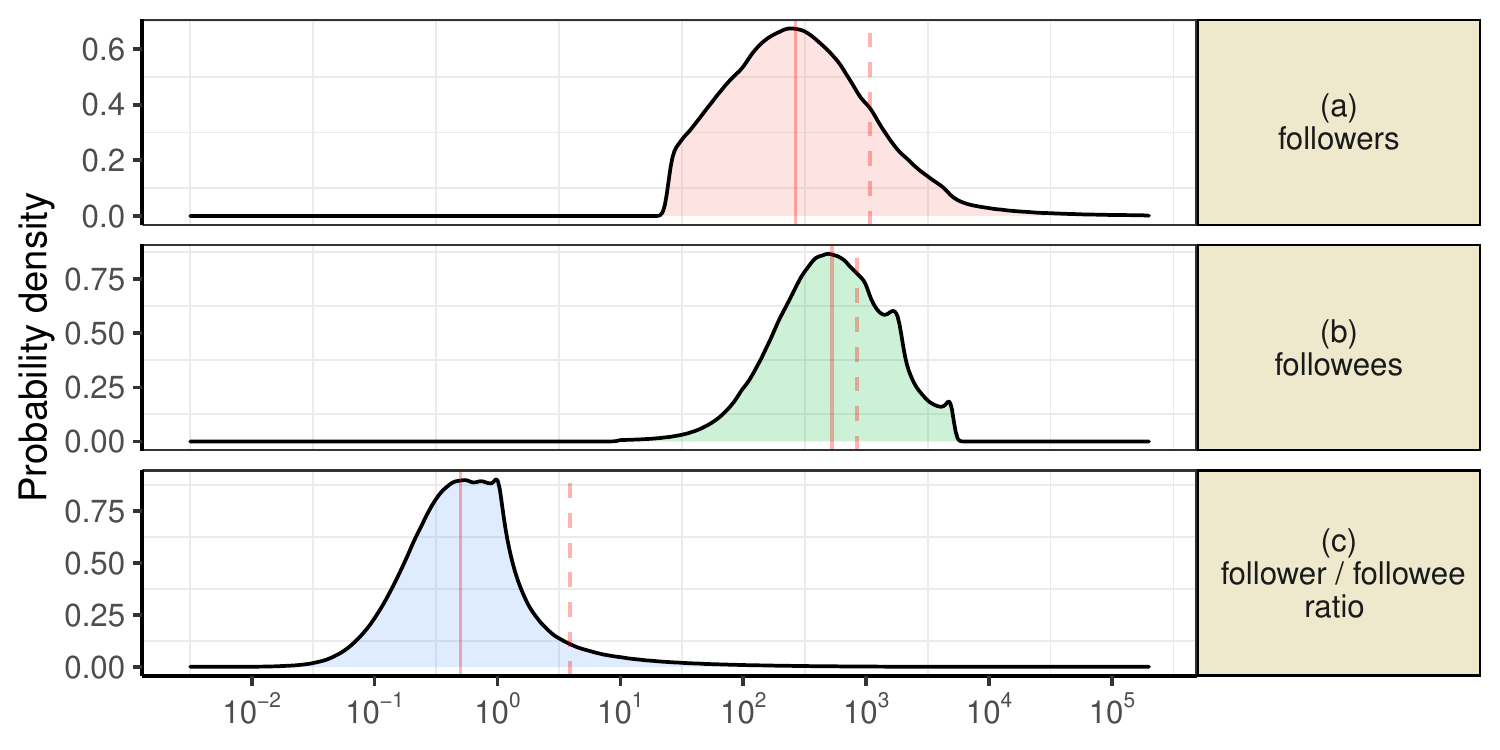}
\end{minipage}
\end{center}
{\footnotesize \begin{minipage}{1 \linewidth}
\vspace*{-0.0in}
\emph{Notes:}
This figure reports kernel density estimates of the probability density function of the distributions of followers, followees, and follower-to-followee ratios in the entire data.
For each facet, the vertical red lines depict the median (solid) and the mean (dashed) of the corresponding distribution, and the kernel bandwidth is selected using  Silverman's rule of thumb~\citep{silverman1986density}.
See Figure~\ref{fig:econtwitter_stats_distr} for density estimates of the same statistics for \ET{} users.
\end{minipage}}
\end{figure}

\newpage \clearpage 
\subsection{More details on the empirical estimates of Figure~\ref{fig:econtwitter_vs_theory_statistics}}

\begin{figure}[h!]
\begin{center}
\begin{minipage}{1 \linewidth}	
\caption{\ET{} estimates of network statistics. \label{fig:econtwitter_statistics_unscaled}}
\includegraphics[width = 0.98 \textwidth]{./}	
\end{minipage}
\end{center}
{\footnotesize \begin{minipage}{1 \linewidth}
\vspace*{-0.1in}
\emph{Notes:}
This figure reports unscaled \ET{} estimates of various network statistics, as a function of user ability.
For more details on the construction of the estimates, see the description of Figure~\ref{fig:econtwitter_vs_theory_statistics}.
\end{minipage}}
\end{figure}

\newpage \clearpage
\subsection{Ability is not highly correlated with tenure}
One alternative explanation for the patterns we observe in the \ET{} data can be found with preferential attachment.
Namely, if more able users join the platform earlier, then these users would attract more followers---ability is not the reason for attracting followers, but time of entry is. 
Furthermore, these users may also follow more able users on average because those users were active at the time of relationship formation. 

Table~\ref{tab:age_ability} examines the relation of tenure, ability, and users' follower-to-followee ratios. 
Tenure is positively correlated with ability and with follower-to-followee ratios, but it only explains about 4\% of the variability in both quantities.
In contrast, user ability is about eight times more predictive of a users' follower-to-followee ratio.
The lack of predictive power of user tenure suggests that the preferential attachment explanation is not strongly supported in our data. 

\begin{table}[h!] \centering 
  \caption{The relation of tenure, ability, and follower-to-followee ratio.} 
  \label{tab:age_ability} 
\begin{tabular}{@{\extracolsep{5pt}}lccc} 
\\[-1.8ex]\hline 
\hline \\[-1.8ex] 
 & \multicolumn{3}{c}{\textit{Dependent variable:}} \\ 
\cline{2-4} 
\\[-1.8ex] & Ability (percentile) & \multicolumn{2}{c}{Follower/followee Ratio} \\ 
\\[-1.8ex] & (1) & (2) & (3)\\ 
\hline \\[-1.8ex] 
 Tenure (days) & 0.006$^{***}$ & 0.0003$^{***}$ &  \\ 
  & (0.0001) & (0.00001) &  \\ 
  & & & \\ 
 Ability (percentile) &  &  & 0.025$^{***}$ \\ 
  &  &  & (0.0001) \\ 
  & & & \\ 
 Constant & 33.461$^{***}$ & 0.777$^{***}$ & 0.267$^{***}$ \\ 
  & (0.375) & (0.016) & (0.008) \\ 
  & & & \\ 
\hline \\[-1.8ex] 
Observations & 55,231 & 55,231 & 55,231 \\ 
R$^{2}$ & 0.041 & 0.042 & 0.344 \\ 
Adjusted R$^{2}$ & 0.041 & 0.042 & 0.344 \\ 
Residual Std. Error (df = 55229) & 28.271 & 1.184 & 0.980 \\ 
F Statistic (df = 1; 55229) & 2,338.288$^{***}$ & 2,436.529$^{***}$ & 28,907.850$^{***}$ \\ 
\hline 
\hline \\[-1.8ex] 
\end{tabular}
\\
\begin{minipage}{1\textwidth}
\begin{footnotesize}
\begin{onehalfspacing}
\emph{Notes:} This table reports regressions where the dependent variable is ability (columne 1) and follower-to-followee ratio (columns 2 and 3).
                                        The independent variables are users' tenure on the platform, and users' abilities.
                                        For more details on the definition of user ability, see Section~\ref{ssec:user_ability}
\starlanguage
\end{onehalfspacing}
\end{footnotesize}
\end{minipage}
\end{table}

\end{document}